\colorlet{MyBlue}{DodgerBlue!60!Black}
\colorlet{MyGreen}{DarkGreen!85!Black}
\numberwithin{equation}{section}  
\crefname{app}{Appendix}{Appendices}
\newcommand{\debug}[1]{#1}
\theoremstyle{plain}
\newtheorem{theorem}{Theorem}[section]
\newtheorem*{corollary*}{Corollary}
\newtheorem{lemma}[theorem]{Lemma}
\newtheorem{conjecture}[theorem]{Conjecture}
\theoremstyle{definition}
\newtheorem{definition}[theorem]{Definition}
\newtheorem*{definition*}{Definition}
\newtheorem*{hypothesis*}{Hypothesis}
\theoremstyle{remark}
\newtheorem{remark}[theorem]{Remark}
\newtheorem*{remark*}{Remark}
\newtheorem*{notation*}{Notational remark}
\newtheorem{example}[theorem]{Example}
\DeclareMathOperator{\Prob}{\mathsf{\debug{P}}}
\DeclareMathOperator{\Expect}{\mathsf{\debug{E}}}
\newcommand{\cost}{\debug c}
\newcommand{\costprof}{\boldsymbol{\cost}}
\newcommand{\Cost}{\debug C}
\newcommand{\eqcost}{\debug \lambda}
\newcommand{\eqcostprof}{\boldsymbol{\eqcost}}
\newcommand{\eq}[1]{#1^{*}}
\newcommand{\opt}[1]{\widetilde#1}
\newcommand{\activ}[1]{\widehat#1}
\DeclareMathOperator{\SC}{\mathsf{\debug{SC}}}
\DeclareMathOperator{\PoA}{\mathsf{\debug{PoA}}}
\newcommand{\graph}{{\debug G}}
\newcommand{\vertices}{\mathcal{\debug V}}
\newcommand{\edges}{\mathcal{\debug E}}
\newcommand{\vertex}{\debug v}
\newcommand{\edge}{\debug e}
\newcommand{\source}{\mathsf{\debug O}}
\newcommand{\sink}{\mathsf{\debug D}}
\newcommand{\rate}{\debug \mu}
\newcommand{\rateprof}{\boldsymbol{\rate}}
\newcommand{\flow}{\debug f}
\newcommand{\flows}{\mathcal{\debug F}}
\newcommand{\flowprof}{\boldsymbol{\flow}}
\newcommand{\load}{\debug x}
\newcommand{\loads}{\mathcal{\debug X}}
\newcommand{\loadprof}{\boldsymbol{\load}}
\newcommand{\nRoutes}{\debug P}
\newcommand{\routes}{\mathcal{\debug \nRoutes}}
\newcommand{\route}{\debug p}
\newcommand{\routealt}{\debug q}
\newcommand{\argdot}{\,\cdot\,}
\newcommand{\diff}{\ \textup{d}}
\newcommand{\ie}{i.e.,\ }
\newcommand{\zerovec}{\boldsymbol{\debug 0}}
\DeclareMathOperator{\Ker}{Ker}
\DeclareMathOperator{\Range}{Range}
\newcommand{\lagrang}{\mathcal{\debug L}}
\newcommand{\odpair}{\debug h}
\newcommand{\ODpairs}{\debug{\mathcal H}}
\newcommand{\edgepath}{\debug \delta}
\newcommand{\edgepaths}{\debug \Delta}
\newcommand{\sumflow}{\debug S}
\newcommand{\potential}{\debug \Phi}
\newcommand{\eqcostroute}{\debug \cost^{*}}
\newcommand{\neigh}{\debug N}
\newcommand{\pertx}{\debug \xi}
\newcommand{\pertxprof}{\boldsymbol{\pertx}}
\newcommand{\pertf}{\debug \omega}
\newcommand{\pertfprof}{\boldsymbol{\pertf}}
\newcommand{\multod}{\debug m}
\newcommand{\multodprof}{\boldsymbol{\multod}}
\newcommand{\valueV}{\debug V}
\newcommand{\multedge}{\debug \eta}
\newcommand{\multedgeprof}{\boldsymbol{\multedge}}
\newcommand{\multpath}{\debug \nu}
\newcommand{\multpathprof}{\boldsymbol{\multpath}}
\newcommand{\nOD}{\debug H}
\newcommand{\npaths}{\debug P}
\newcommand{\odrate}{\debug r}
\newcommand{\odrateprof}{\boldsymbol{\odrate}}
\newcommand{\regime}{\mathcal{\debug R}}
\newcommand{\yvar}{\debug y}
\newcommand{\thfunc}{\debug \theta}
\newcommand{\inter}{\debug I}
\newcommand{\eqcostedge}{\debug \tau}
\newcommand{\yvec}{\boldsymbol{\debug y}}
\newcommand{\zvar}{\debug z}
\newcommand{\onevec}{\boldsymbol{\debug 1}}
\newcommand{\canbasis}{\boldsymbol{\debug \gamma}}
\newcommand{\Rpos}{\mathbb R_+}
\newcommand{\vecw}{\debug w}
\newcommand{\vecz}{\debug z}
\newcommand{\jac}{\debug J}
\newcommand{\Amatr}{\debug A}
\newcommand{\Bmatr}{\debug B}
\newcommand{\dvar}{\debug d}
\newcommand{\dprof}{\boldsymbol{\dvar}}
\newcommand{\multA}{\debug \alpha}
\newcommand{\multAprof}{\boldsymbol{\multA}}
\newcommand{\multB}{\debug \beta}
\newcommand{\multBprof}{\boldsymbol{\multB}}
\newcommand{\var}{\debug t}
\newcommand{\prim}{\mathsf{\debug P}}
\newcommand{\reals}{\mathbb{R}}
\DeclareMathOperator*{\union}{\bigcup}
\DeclarePairedDelimiter{\braces}{\{}{\}}
\DeclarePairedDelimiter{\parens}{(}{)}
\DeclarePairedDelimiterX{\inner}[2]{\langle}{\rangle}{#1,#2}
\DeclarePairedDelimiterX{\setdef}[2]{\{}{\}}{#1:#2}
\DeclarePairedDelimiterXPP{\probof}[1]{\Prob}{(}{)}{}{%

#1}
\DeclarePairedDelimiterXPP{\exof}[1]{\Expect}{[}{]}{}{%

#1}
\newacro{ACG}{atomic congestion game}
\newacro{ACGSD}{atomic congestion game with stochastic demand}
\newacro{CRG}{constrained routing game}
\newacro{PoA}{price of anarchy}
\newacro{PoS}{price of stability}
\newacro{SC}{social cost}
\newacro{SEC}{social expected cost}
\newacro{SO}{social optimum}
\newacro{SOC}{socially optimum cost}
\newacro{PNMC}{parallel-network with multiple-commodities}
\newacro{MES}{monotonic equilibrium selection}
\newacro{NE}{Nash equilibrium}
\newacro{BNE}{Bayesian Nash equilibrium}
\newacro{PNE}{pure Nash equilibrium}
\newacro{WE}{Wardrop equilibrium}
\newacro{KKT}{Karush\textendash Kuhn\textendash Tucker}
\newacro{OD}[OD]{origin-destination}
\newacro{BPR}{Bureau of Public Roads}
\newacro{SP}{series-parallel}
\newacro{CSP}{constrained series-parallel}
\journal{arXiv
}
\begin{document}

\begin{frontmatter}



\title{Phase Transitions of the Price-of-Anarchy Function in Multi-Commodity Routing Games}


\author[labelRoberto]{Roberto Cominetti}

\affiliation[labelRoberto]{organization={Facultad de Ingenier{\'\i}a y Ciencias, Universidad Adolfo Ib\'a\~nez},
            addressline={Diagonal las Torres 2640}, 
            city={Pe{\~n}alol{\'e}n},
            postcode={7910000}, 
            state={Región Metropolitana},
            country={Chile}}
            
\author[labelValerio]{Valerio Dose}

\affiliation[labelValerio]{organization={Dipartimento di Ingegneria Informatica, Automatica e Gestionale, ``Sapienza'' Universit\`a  di Roma},
            addressline={Via Ariosto 25}, 
            city={Roma},
            postcode={00185}, 
            country={Italy}}

\author[labelMarco]{Marco Scarsini}

\affiliation[labelMarco]{organization={Dipartimento di Economia e Finanza, Luiss University},
            addressline={Viale Romania 32}, 
            city={Roma},
            postcode={00197}, 
            country={Italy}}

\begin{abstract}
We consider the behavior of the \acl{PoA} and  equilibrium flows in nonatomic multi-commodity routing games as a function of the traffic demand. 
We analyze their smoothness with a special attention to specific values of the demand at which the support of the Wardrop equilibrium exhibits a phase transition with an abrupt change in the set of optimal routes. 
Typically, when such a phase transition occurs, the \acl{PoA} function has a breakpoint, \ie is not differentiable. 
We prove that, if the demand varies proportionally across all commodities, then, at  a breakpoint,  the largest  left or right derivatives of the \acl{PoA} and of the social cost at equilibrium, are associated with the smaller equilibrium support.
This proves---under the assumption of proportional demand---a conjecture of \citet{OHaConWat:TRB2016}, who observed this behavior in simulations. 
We also provide counterexamples showing that this monotonicity of the one-sided derivatives may fail when the demand does not vary proportionally,  even if it moves along a straight line not passing through the origin.
\end{abstract}




\begin{keyword}
Wardrop equilibrium \sep network flows \sep traffic demand



\MSC[2020] 91A14 \sep 91A07 \sep 91A43
\end{keyword}

\end{frontmatter}




%
%

\section{Introduction}\label{se:intro}
Wardrop equilibria for nonatomic routing games provide a mathematical description of how traffic distributes over a network used by a large number of agents who do not coordinate or cooperate. 
Standard examples are road and telecommunication networks. 
This model is a special case of a nonatomic congestion game where the resources are the edges of a directed multigraph,
and each edge $\edge$ has an associated cost or delay 
$\cost_{\edge}(\load_{\edge})$ given by a non-decreasing and continuous  function  of the traffic load $\load_{\edge}$ on the edge.
Different types of users have different \ac{OD} pairs, typically called  commodities. 
A nonnegative traffic demand  is associated to each \ac{OD} pair, and users can choose different paths to go from their  origin to their destination.
In doing so, they generate a flow over the network.
A Wardrop equilibrium is a distribution of traffic which satisfies the demand on each \ac{OD} pair in a such a way that all the users of any given type only use paths of minimal cost.

Traffic demand is often difficult to predict as it is eminently a dynamic phenomenon, with variations between days of the week and across different periods within any given day. 
Moreover, traffic demand increases on a longer time scale as a result of the growth of the population and car-ownership. 
As a consequence, it is relevant to analyze the sensitivity of Wardrop equilibria under different demand scenarios to understand the extent to which this equilibrium notion  approximates real traffic patterns. 
Using this analysis, a planner can anticipate the evolution of equilibria and better design policy interventions and infrastructure modifications that the increasing traffic level requires.
These observations have motivated several authors to investigate how the Wardrop equilibrium is affected by variations in the traffic demands, in terms of its continuity, smoothness, and monotonicity properties. 
In  \cref{Sec:Related_Work} we present a brief overview of previous works that have addressed these questions.

On the other hand, Wardrop equilibria are known to be inefficient in the sense that there may exist other traffic distributions that produce a lower total  delay. 
Inefficiency of equilibria is usually measured via the \acfi{PoA}\acused{PoA}, \ie the ratio between the total equilibrium cost and the optimum total cost. 
This provides a benchmark for the maximal improvement that could be achieved in terms of social cost.
While the initial studies on the \ac{PoA} focused on finding tight bounds under worst case scenarios,  recent work has  turned to analyze the  behavior of the \ac{PoA} as a function of the traffic demand.

\begin{figure}[ht]
\setcounter{subfigure}{0}
\begin{subfigure}[Some network]
{
\begin{tikzpicture}
   \node[shape=circle,draw=black,line width=1pt,minimum size=0.5cm] (v1) at (-3,0)  { $\source$}; 
   \node[shape=circle,draw=black,line width=1pt,minimum size=0.5cm] (v2) at (0,1.3)  {$\vertex_{1}$}; 

   \node[shape=circle,draw=black,line width=1pt,minimum size=0.5cm] (v5) at (0,-1.3)  {$\vertex_{2}$}; 
   \node[shape=circle,draw=black,line width=1pt,minimum size=0.5cm] (v6) at (3,0)  {$\sink$}; 
    
   \draw[line width=1pt,->] (v1) to   node[midway,fill=white] {$\frac{\load}3$} (v2);
   \draw[line width=1pt,->] (v1) to   node[midway,fill=white] {$1$} (v5);
   \draw[line width=1pt,->] (v2) to   node[midway,fill=white] {$0$} (v5);
   
   \draw[line width=1pt,->] (v2) to   node[midway,fill=white] {$1$} (v6);

   \draw[line width=1pt,->] (v5) to   node[midway,fill=white] {$\load$} (v6);
   \draw[line width=1pt,->] (v1) to [bend right=90]  node[midway,fill=white] {$\frac 52+\load$} (v6); \draw[line width=1pt,->] (v1) to [bend left=90]  node[midway,fill=white] {$4+\frac{\load}2$} (v6);   
        
\end{tikzpicture}
\label{fig:graph_intro}
}
\end{subfigure}
\hspace{0.5cm}
\begin{subfigure}[\ac{PoA} for the network on the left]
{
\includegraphics[width=0.5\textwidth]{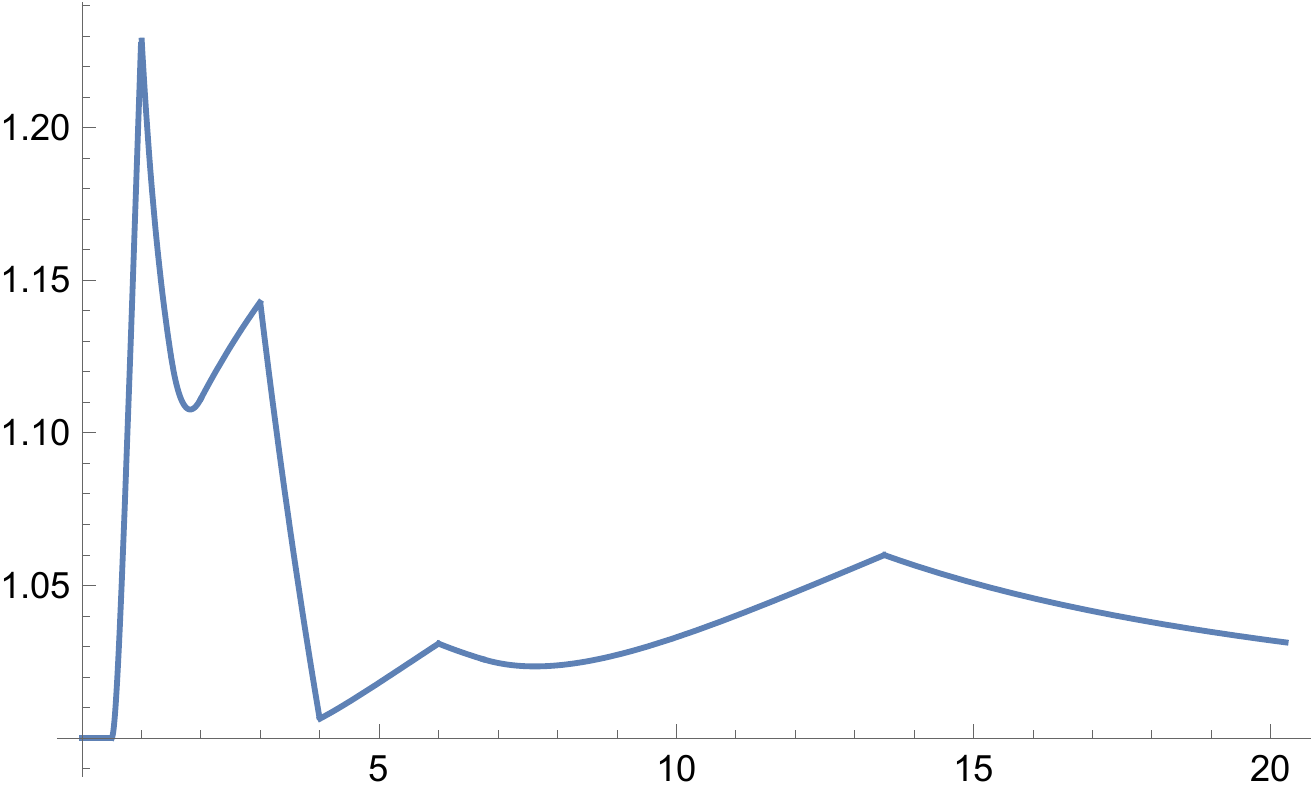}
\label{fig:PoA_intro}}
\end{subfigure}
\caption{ A single-commodity example of the behavior of the \ac{PoA} as a function of the demand.
This function is nonsmooth at demands $1$, $3$, $4$, $6$ and $27/2$.  
Between these breakpoints the set of optimum paths remains stable and the \ac{PoA} behaves smoothly.
At demand $4$ the set of optimum paths exhibits a contraction and the derivative jumps up. 
At all the other breakpoints the set of  optimum paths undergoes an expansion and the derivative jumps down.
}
\label{fig:intro}
\end{figure}

\cref{fig:intro} shows a typical profile of the \ac{PoA} as the demand varies: it is close to $1$ when the demand is either very small or very large, and it oscillates in the middle range with occasional sharp kinks and smooth critical points at local minima. As reported by \cite{YouGasJeo:PRL2008}, this type of  profile
is also observed in real networks such as Boston, London, and New York, where in addition the $\PoA$ values remain far from the worst case estimates. The fact that the \ac{PoA} is close to $1$  for both low and high demands was formally established under fairly general assumptions in \cite{ColComSca:TOCS2019,ColComMerSca:OR2020} and \cite{WuMohCheXu:OR2021}.
For the intermediate range, \citet{OHaConWat:TRB2016} observed that the kinks are associated with a sudden change in the set of optimum paths at equilibrium, and stated a conjecture that we analyze in this paper. A partial support for this was given by \citet{ComDosScar:MathPRog2021}:
in single-commodity routing games with affine cost functions, and in demand intervals where the set of optimum paths at equilibrium does not change, the \ac{PoA} is differentiable and exhibits at most one critical point, which  must be a local minimum. 
In this paper we further explore these questions for multi-commodity networks and more general cost functions.

\subsection{Our Results}
We consider nonatomic routing games and  analyze the behavior of the equilibrium flows, the social cost, and the \ac{PoA}, as the traffic demands vary on potentially multiple  \ac{OD} pairs.   The whole paper deals with \emph{constrained} routing games in which every \ac{OD} pair is allowed to use a restricted set of paths and not necessarily all the paths connecting its origin and destination. 
The main takeaway of our study is that when the demands vary with constant proportions across all \ac{OD} pairs, the behavior of the \ac{PoA} is analogue to what is observed in the single-commodity  case. 
However, this is not what happens if the demand varies without keeping constant proportions among the different \ac{OD} pairs.

Specifically, we begin by generalizing to multiple \ac{OD} pairs some results on the continuity and differentiability of equilibria as a function of the demands. 
We show that if the cost functions are $\mathcal C^{1}$ with strictly positive derivative, and the demand varies along a smooth curve  parameterized by a single variable $\var$, then the equilibrium loads on each edge and the \ac{PoA} are also $\mathcal C^{1}$ as functions of $\var$, as long as the  demand curve stays in a region where the set of minimum-cost paths does not change. 

Typically---although not always---this smoothness fails at points where the set of  optimum paths  changes
and the equilibrium exhibits a phase transition. 
We call such points \emph{breakpoints}.
For networks with $\mathcal C^{1}$ cost functions having strictly positive derivatives and proportionally varying demands, we prove a conjecture stated in \citet{OHaConWat:TRB2016}, showing that at a breakpoint the largest between the left and right derivatives of the \ac{PoA} corresponds to the smaller set of  optimum paths (see the example in \cref{fig:intro}).
We also show with an example that the full conjecture, as originally stated in \citet{OHaConWat:TRB2016}, 
might fail when the demands do not vary proportionally, even if these demands move along an affine straight line in the space of demands.

\subsection{Related Work}\label{Sec:Related_Work}

The definition of equilibrium in nonatomic routing games that we adopt in this paper is due to  \citet{War:PICE1952}.
The characterization of Wardrop equilibria as solutions of a convex optimization problem was first described in \citet{BecMcGWin:Yale1956}; and the first attempts to explicitly compute such equilibria were due to \citet{Tom:OR1966} for affine costs, and to \citet{DafSpa:JRNBSB1969} for general convex costs. 
For surveys on the topic the reader is referred to  \citet{FloHea:HORMS1995} and \citet{CorSti:EORMS2011}.

Several authors have considered how the equilibrium flows and costs vary with the traffic demand. 
\citet{Hal:TS1978} proved that the equilibrium cost of any given \ac{OD} pair is an increasing function of the amount of traffic on that \ac{OD} pair. 
On the other hand \citet{Fis:TRB1979} presented an example showing that the total social cost can decrease even if the total demand increases. 
Further analysis of this  paradoxical phenomenon can be found in \citet{DafNag:MP1984}. 
Concerning the smoothness of equilibrium flows, \citet{Pat:TS2004} gave  necessary and sufficient conditions for the directional differentiability of equilibria, and 
\citet{JosPat:TRB2007} proved the directional differentiability of equilibrium costs, but observed that this might fail for the equilibrium edge loads.
Concerning  the set of optimum paths, \citet{EngFraOlb:TCS2010} showed examples of single \ac{OD} routing games where an arbitrarily small increase in traffic demand generates a complete change of the set of paths used at equilibrium, although the change on the edge loads remains small. 
Moreover, for polynomial costs of degree at most $d$,  if the demand increases by $\varepsilon$ then the equilibrium costs increase at most by a multiplicative factor $(1+\varepsilon)^{d}$. 
The extension of this latter result to multiple \ac{OD} pairs can be found in \citet{TakKwo:OL2020}. 
More recently, an analysis of instances where there exists a Wardrop equilibrium with loads that are nondecreasing for each edge was carried out in \citet{ComDosSca:EJOR2024}, whereas a study on conditions that allow Braess' paradox to happen at specific demands was performed by \citet{VerChe:arXiv2023}. 
In this last work the authors  studied the slope of the equilibrium costs, in the setting of networks with affine cost functions; these results are similar to some theorems proved here.

\citet{Pig:Macmillan1920} was probably the first author who studied  the inefficiency of equilibria in nonatomic routing games. 
The formal definition of  \ac{PoA} to measure this inefficiency is due to \citet{KouPap:STACS1999}, and acquired its name in \citet{Pap:ACMSTC2001}.
Most of the early literature on the \ac{PoA} concentrated on establishing sharp bounds for the \ac{PoA} for different classes of games, such as congestion games and routing games.
In a landmark paper, \citet{RouTar:JACM2002}  proved that in every nonatomic congestion game with affine costs the \ac{PoA} is bounded above by $4/3$ and showed that this bound is sharp.
This result was generalized to polynomial cost functions of maximum degree $d$  in \citet{Rou:JCSS2003}, showing that the \ac{PoA} grows as $\Theta(d/\log d)$. 
Other results of this type can be found in \citet{DumGai:INE2006}, who focused on cost functions that are sums of monomials whose degrees are in a specified range, and in
\citet{RouTar:GEB2004}, who dealt with the class of differentiable cost functions $\cost(\load)$
such that $\load\, \cost(\load)$ is convex.
Less regular cost functions and different notions of social cost were studied in \citet{CorSchSti:MOR2004,CorSchSti:OR2007,CorSchSti:GEB2008}.

Several papers addressed the computation of the \ac{PoA} in real networks. 
\citet{MonBenPil:WINE2018} analyzed data from a large sample of Singaporean students who commute to go to school,
observing a \ac{PoA} which is much smaller than the theoretical worst case bounds. Also,
\citet{YouGasJeo:PRL2008,YouGasJeo:PRL2009} studied the \ac{PoA} in the networks of  Boston, London, and New York when all the OD traffic demands are scaled by the same factor, observing again values of the \ac{PoA} consistently smaller than the worst case bounds.
The behavior of the  \ac{PoA}  in these three cities shows a common pattern: it is close to $1$ both for small and large demands, and it oscillates in the middle range with sharp kinks at the local maxima and smooth critical points at the local minima.
\citet{OHaConWat:TRB2016} noted that these kinks arise when the set of optimum paths at equilibrium undergoes an expansion or a contraction, and stated 
the conjecture that we analyze later in this paper. 

Recent efforts  attempted to mathematically explain the empirical behavior of the \ac{PoA} observed in the studies mentioned above. 
An algorithm for computing Wardrop equilibria as a function of the traffic demand was given in \citet{KliWar:SODA2019,KliWar:MOR2022}, in the case of piecewise linear cost functions.
The efficiency of equilibria when the demand is close to zero or infinity (light and heavy traffic) was analyzed in \citet{ColComSca:TOCS2019,ColComMerSca:OR2020} and \citet{WuMohCheXu:OR2021}.  
\citet{ColComSca:TOCS2019} considered the case of single \ac{OD} parallel networks showing that in heavy traffic the \ac{PoA} converges to $1$ when the cost functions are regularly varying. 
These results were extended to general networks in \citet{ColComMerSca:OR2020}, 
considering also the behavior of \ac{PoA} in the light traffic regime. 
The case of heavy traffic was  treated with different techniques by  \citet{WuMohCheXu:OR2021}. 
The study of the intermediate range of demands, when the traffic is neither  light nor heavy, was the main objective of \citet{ComDosScar:MathPRog2021}, who proved that for affine cost functions the shape of the \ac{PoA} function is  the one observed in the empirical studies. 
Finally, we mention 
\citet{WuMoh:MOR2023}, who established the continuity of the \ac{PoA} in non atomic routing games, as a function of various parameters including not only the demands, but also the cost functions.

\subsection{Organization of the paper} 
\label{suse:organization}
\cref{se:network-games}
recalls the model to be studied, and fixes the notations used thereafter. \cref{se:regularity} presents some regularity 
and smoothness results for the equilibrium loads, the social cost at equilibrium, the optimum social cost, and the \acl{PoA}.
\cref{se:singleparameter} discusses the behavior of these quantities around breakpoints, where smoothness is typically lost.
\cref{se:proof-main-theorem} contains the proof of the main result.
Some other technical proofs are presented in \cref{se:supplementary}, whereas 
\cref{se:symbols} contains a list of the symbols used in the paper.

%
%

\section{Network games with variable demand}
\label{se:network-games}

This section introduces the model and notations used throughout the paper. 
We consider routing games with multiple \ac{OD} pairs,
and we study their equilibria  as functions of the multivariate demand, 
concentrating on the case where the demand vector varies 
 along a smooth curve in the space of demands. 
In particular, we study the simplest and  natural case where  the demands are scaled by a common factor so that they change proportionally across \ac{OD} pairs.

Let $\graph\coloneqq (\vertices,\edges)$ be a directed multigraph with vertex set $\vertices$ and  edge set $\edges$, where each edge $\edge\in\edges$ has a  continuous and nondecreasing \emph{cost function} $\cost_{\edge}\colon[0,+\infty)\rightarrow[0,+\infty)$, with $\cost_{\edge}(\load_{\edge})$ representing the travel time of traversing the edge $\edge$ when the load on that edge is $\load_{\edge}$.
An \acfi{OD}\acused{OF}  $\odpair$ is defined by a triple $(\source^{(\odpair)},\sink^{(\odpair)},\routes^{(\odpair)})$, where $\source^{(\odpair)}\in\vertices$ is the origin vertex, $\sink^{(\odpair)}\in\vertices$ is the destination vertex, and $\routes^{(\odpair)}$
is a subset of simple paths from  $\source^{(\odpair)}$ to $\sink^{(\odpair)}$. 
The symbol $\ODpairs$ denotes the set of \ac{OD}'s.
The sets $\routes^{(\odpair)}$ are assumed nonempty but could be smaller than the sets of all paths from origin to destination in the graph $\graph$. 
In order to simplify our notations and formulas, we assume that these sets are pairwise disjoint so that no path is common to two different \ac{OD}'s.
In particular, when considering the flow of a path, we do not need to specify to which \ac{OD} the traffic flow belongs. We observe that our results do not depend on this simplifying assumption (see  \cref{rm:assumption-disjoint}). 
The set of all feasible routes is the disjoint union of the sets $\routes^{(\odpair)}$ for  $\odpair\in\ODpairs$, denoted by
\begin{equation}
\label{eq:paths}
\routes \coloneqq \union_{\odpair\in\ODpairs} \routes^{(\odpair)}.   
\end{equation}

\begin{definition}
\label{de:nonatomic-routing-game}
A \emph{routing game structure} is a tuple $(\graph,\costprof,\ODpairs)$, where  $\graph$ is a directed multigraph, $\costprof$ is the vector of nondecreasing and continuous edge costs, and $\ODpairs$ is the set of \ac{OD} pairs.
\end{definition} 

A \emph{traffic demand} for the routing game structure $(\graph,\costprof,\ODpairs)$ is a vector $\rateprof=(\rate^{(\odpair)})_{\odpair\in\ODpairs}\in\Rpos^{\ODpairs}$. 
Every such $\rateprof$  determines an associated set of \emph{feasible flows} $\flowprof = \parens*{\flow_{\route}}_{\route\in\routes}\in\Rpos^{\routes}$ satisfying
\begin{equation}
\label{eq:flows} 
\sum_{\route\in\routes^{(\odpair)}}\flow_{\route}=\rate^{(\odpair)}\text{ for all }\odpair\in\ODpairs.  
\end{equation}
These flows induce in turn a \emph{load profile} $\loadprof=(\load_{\edge})_{\edge\in\edges}$, where $\load_{\edge}$ represents the aggregate traffic over the edge $\edge\in\edges$ given by
\begin{equation}\label{eq:loads}
\load_{\edge}=\sum_{\route\in\routes}\edgepath_{\edge\route} \flow_{\route}\text{ for all }\edge\in\edges,
\end{equation}
with $\edgepath_{\edge\route}=1$ if $\edge\in\route$ and  $\edgepath_{\edge\route}=0$ otherwise.
More concisely, \eqref{eq:flows} and \eqref{eq:loads} can be written in matrix form as 
\begin{equation}
\label{eq:matricesConstraints}
 \sumflow \flowprof=\rateprof,\quad \loadprof=\edgepaths \flowprof.
\end{equation}
We write $\flows_{\rateprof}$ for the set of pairs $(\flowprof,\loadprof)\in\Rpos^{\routes}\times\Rpos^{\edges}$ satisfying these flow/load conservation equations. We also write 
$\loads_{\rateprof}$ for the projection of $\flows_{\rateprof}$
onto the the space of load variables
$\loadprof$.

With a slight abuse of notation,
the total travel time on a path $\route\in\routes$ is denoted

\begin{equation}\label{eq:path_cost}
\cost_{\route}(\loadprof)\coloneqq \sum_{\edge\in\route}\cost_{\edge}(\load_{\edge}).
\end{equation}
The total delay induced by  $(\flowprof,\loadprof)$ is called the \emph{social cost} and is denoted by
\begin{equation}
\label{eq:social-cost}  
\SC(\flowprof,\loadprof):=
\sum_{\route\in\routes}\flow_{\route}\cdot\cost_{\route}(\loadprof)=
\sum_{\edge\in\edges}\load_{\edge}\cdot\cost_{\edge}(\load_{\edge}).
\end{equation}

%
%

\subsection{Equilibria and price of anarchy }
\label{sec:active-regimes-and-breakpoints}

Let $(\graph,\costprof,\ODpairs)$ be  a routing game structure. 
For each demand vector $\rateprof\in\Rpos^{\ODpairs}$, we obtain a classical nonatomic routing game. 
We recall that a Wardrop equilibrium is a feasible flow-load pair $(\eq{\flowprof},\eq{\loadprof})\in \flows_{\rateprof}$ for which there exists  $\eqcostprof\coloneqq (\eqcost^{(\odpair)})_{\odpair\in\ODpairs}\in\reals^{\ODpairs}$ such that
\begin{equation}
\label{eq:Wardrop}
\begin{cases}
\cost_{\route}(\eq{\loadprof})=\eqcost^{(\odpair)} & \text{for every }\odpair\in\ODpairs\text{ and all }\route\in\routes^{(\odpair)}\text{ with }\eq{\flow}_{\route}>0,\\
\cost_{\route}(\eq{\loadprof})\ge\eqcost^{(\odpair)} & \text{for every }\odpair\in\ODpairs\text{ and all }\route\in\routes^{(\odpair)}\text{ with }\eq{\flow}_{\route}=0.
\end{cases}
\end{equation}
The quantity $\eqcost^{(\odpair)}$ is called the \emph{equilibrium cost} associated to the \ac{OD} pair $\odpair\in\ODpairs$.

From \citet{BecMcGWin:Yale1956} we know that the set of all  equilibrium load profiles $\eq{\loadprof}$ coincides
with the set of optimum solutions of the minimization problem
\begin{equation}
\label{eq:min-load}    
\min_{\loadprof\in\loads_{\rateprof}}\potential(\loadprof),
\end{equation}
where $\potential(\loadprof):=\sum_{\edge\in\edges}\Cost_{\edge}(\load_{\edge})$ with $\Cost_{\edge}(\load_{\edge})\coloneqq \int_0^{\load_{\edge}}\cost_{\edge}(z)\diff z$. 
Since the cost functions $\cost_{\edge}$ are continuous and nondecreasing, the objective function $\potential(\argdot)$ is  $\mathcal C^{1}$ and  convex. 
Hence,  the minimum is attained and therefore for every $\rateprof\in\Rpos^{\ODpairs}$ there exists at least one equilibrium.

An equilibrium $(\eq{\flowprof},\eq{\loadprof})\in\flows_{\rateprof}$ induces equilibrium edge costs $\eqcostedge_{\edge}\coloneqq \cost_{\edge}(\eq{\load}_{\edge})$. 
In \citet{Fuk:TRB1984}, the equilibrium edge costs were shown to be optimum solutions of the strictly convex dual program 
\begin{equation}\label{eq:Fukushima}
\min_{\boldsymbol{\tau}}\sum_{\edge\in\edges}\Cost_{\edge}^{*}(\tau_{\edge}) -
\sum_{\odpair\in\ODpairs}\parens*{\rate^{(\odpair)}\,\min_{\route\in\routes^{(\odpair)}}\sum_{\edge\in\route}\tau_{\edge}},
\end{equation}
where $\Cost_{\edge}^{*}(\argdot)$ is the Fenchel conjugate of $\Cost_{\edge}(\argdot)$, which is strictly convex. 
Thus, the edge costs $\eqcostedge_{\edge}$ at equilibrium are uniquely defined for each demand vector $\rateprof$, and are the same in every equilibrium load profile $\eq{\loadprof}$. This defines maps
$\rateprof\mapsto \eqcostedge_{\edge}(\rateprof)$
that assign these equilibrium edge costs 
to each demand vector $\rateprof$.
It follows that, at equilibrium, the cost $\eqcostroute_{\route}(\rateprof)=\sum_{\edge\in\routes}\eqcostedge_{\edge}(\rateprof)$ of a path $\route\in\routes$, as well as the  equilibrium costs $\eqcost^{(\odpair)}(\rateprof)=\min_{\route\in\routes^{(\odpair)}}\eqcostroute_{\route}(\rateprof)$ for each $\odpair\in\ODpairs$, are also uniquely defined for each $\rateprof$ and do not depend on the specific equilibrium  $(\eq{\flowprof},\eq{\loadprof})$ that we might consider.

Moreover, when the edge costs are strictly increasing, the equilibrium loads are also uniquely determined by $\eq{\load}_{\edge}(\rateprof)=\cost_{\edge}^{-1}(\eqcostedge_{\edge}(\rateprof))$.
For later reference, we introduce the concept of active regime, which associates to each $\rateprof$ the set of minimum-cost paths at equilibrium.
\begin{definition}
\label{de:regime}
Consider the routing game structure $(\graph,\costprof,\ODpairs)$.
\begin{enumerate}[label=\textup{(\alph*)}, ref=\textup{(\alph*)}] 
\item 
\label{it:regime}
A subset  $\regime\subset\routes$ is called a \emph{regime} if $\regime\cap\routes^{(\odpair)}\ne\varnothing$ for every $\odpair\in\ODpairs$.

\item 
\label{it:active-regime}
The \emph{active regime} at demand $\rateprof\in\Rpos^{\ODpairs}$ is the set
\begin{equation}
\label{eq:active-regime} 
\activ{\routes}(\rateprof)=\bigcup_{\odpair\in\ODpairs}\braces*{\route\in\routes^{(\odpair)}\colon \eqcostroute_{\route}(\rateprof)=\eqcost^{(\odpair)}(\rateprof)}.
\end{equation}
\end{enumerate}
\end{definition}

All of this allows us to define the \emph{equilibrium social cost} with demand $\rateprof\in\Rpos^{\ODpairs}$ as
\begin{equation}
\label{eq:equilibrium-social-cost}  
\eq{\SC}(\rateprof)\coloneqq \sum_{\odpair\in\ODpairs}\rate^{(\odpair)}\eqcost^{(\odpair)}(\rateprof)
=\sum_{\route\in\routes}\eq{\flow}_{\route}\,\cost_{\route}(\eq{\flowprof})=\sum_{\edge\in\edges}\eq{\load}_{\edge}\,\cost_{\edge}(\eq{\load}_{\edge}),
\end{equation}
where $(\eq{\flowprof},\eq{\loadprof})\in\flows_{\rateprof}$ is  any equilibrium flow.

Also, the \emph{optimum social cost} for a demand vector $\rateprof\in\Rpos^{\ODpairs}$ is defined as 
\begin{equation}
\label{eq:optimum-social-cost}    
\opt{\SC}(\rateprof)
\coloneqq \min_{(\flowprof,\loadprof)\in\flows_{\rateprof}}\sum_{\route\in\routes}\flow_{\route}\,\cost_{\route}(\loadprof)
=\min_{\loadprof\in\loads_{\rateprof}}\sum_{\edge\in\edges}\load_{\edge}\,\cost_{\edge}(\load_{\edge}),
\end{equation}
and the \emph{\acl{PoA}} at demand $\rateprof\in\Rpos^{\ODpairs}$ is 
\begin{equation}
\label{eq:poa} 
\PoA(\rateprof)\coloneqq \frac{\eq{\SC}(\rateprof)}{\opt{\SC}(\rateprof)}\ge 1.
\end{equation}

%
%

\section{Continuity and smoothness of equilibria and \ac{PoA}.}
\label{se:regularity}

In what follows we study the behavior of the equilibrium,  social cost, and price-of-anarchy, when the demand varies on a smooth curve parameterized by a scalar variable. For instance,  a linear demand function $\rateprof(\var)=\var\,\odrateprof$ for a fixed vector $\odrateprof\in \Rpos^{\nOD}$ represents a situation where the demands vary proportionally across different \ac{OD} pairs.

\begin{definition}
A \emph{nonatomic routing game with demand function $\rateprof(\argdot)$} is given by  a tuple $(\graph,\costprof,\ODpairs,\rateprof(\argdot))$,  where the demand is a function $\rateprof \colon [0,\infty) \to \Rpos^{\nOD}$.
\end{definition}

\begin{theorem}
\label{th:differentiability-on-curve}
Let $(\graph,\costprof,\ODpairs,\rateprof(\argdot))$ be a nonatomic routing game with a continuously differentiable demand function $\rateprof(\argdot)$. 
Let the cost functions $\cost_{\edge}$ be $\mathcal C^{1}$ with strictly positive derivative,
and suppose that the active regime $\activ{\routes}(\rateprof(\var))$ is constant on a neighborhood of $\var^0\in\Rpos$. Then:
\begin{enumerate}[label=\textup{(\alph*)}, ref=\textup{(\alph*)}] 
\item
\label{th:differentiability-on-curve-a}
The equilibrium  load map $\var\mapsto \eq{\loadprof}(\rateprof(\var))$ is continuously differentiable on a neighborhood of $\var^0$. 
In particular the equilibrium costs $\parens*{\eqcost^{(\odpair)}(\rateprof(\var))}_{\odpair\in\ODpairs}$ and the social cost at equilibrium $\eq{\SC}(\rateprof(\var))$ are of class $\mathcal{C}^1$ on a neighborhood of $\var^0$.
\item
\label{prop:PoA-regularity}
If in addition the maps $\load_{\edge}\mapsto\load_{\edge}\cost_{\edge}(\load_{\edge})$ are convex, then minimum social cost $\opt{\SC}(\rateprof(\var))$ and the \acl{PoA} $\PoA(\rateprof(\var))$ are of class $\mathcal{C}^1$ on a neighborhood of $\var^0$.
\end{enumerate}
\end{theorem}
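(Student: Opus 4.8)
The plan is to exploit the strictly convex variational characterization of the equilibrium and then invoke the implicit function theorem, reducing everything to the smoothness of the load map. Since each $\cost_{\edge}\in\mathcal C^{1}$ with $\cost_{\edge}'>0$, the potential $\potential(\loadprof)=\sum_{\edge}\Cost_{\edge}(\load_{\edge})$ is $\mathcal C^{2}$ and strictly convex, so for every $\var$ the equilibrium load $\eq{\loadprof}(\rateprof(\var))$ is the unique minimizer of $\potential$ over $\loads_{\rateprof(\var)}$. Fix the regime $\regime=\activ{\routes}(\rateprof(\var^{0}))$, which is constant near $\var^{0}$ by hypothesis. I would rewrite the equilibrium as the reduced problem of minimizing $\potential(\edgepaths\flowprof)$ subject to $\sum_{\route\in\regime\cap\routes^{(\odpair)}}\flow_{\route}=\rate^{(\odpair)}(\var)$, with $\flow_{\route}=0$ for $\route\notin\regime$, and record the first-order conditions as the system $\cost_{\route}(\loadprof)=\eqcost^{(\odpair)}$ for $\route\in\regime\cap\routes^{(\odpair)}$ (equal active-path costs), $\load_{\edge}=\sum_{\route\in\regime}\edgepath_{\edge\route}\flow_{\route}$, together with the demand equations; the inactive paths satisfy the strict inequalities $\cost_{\route}(\loadprof)>\eqcost^{(\odpair)}$, which persist under small perturbations of $\var$ precisely because $\regime$ is locally constant.

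Viewing these relations as $F(\loadprof,\flowprof,\eqcostprof;\var)=0$, I would apply the implicit function theorem in the parameter $\var$. The hard part will be the nonsingularity of the Jacobian in the unknowns: this rests on the strict convexity of $\potential$, whose Hessian $\diag(\cost_{\edge}'(\load_{\edge}))_{\edge\in\edges}$ is positive definite since $\cost_{\edge}'>0$, combined with a full-rank condition on the linearized equal-cost and conservation constraints. Because flows carrying a given load need not be unique, a square invertible system is obtained only after removing the redundancy — e.g.\ by restricting $\flowprof$ to a maximal linearly independent set of active routes, or equivalently by applying the standard sensitivity theorem for the minimizer of a $\mathcal C^{2}$ strictly convex function over a $\mathcal C^{1}$ family of affine subspaces, the relevant affine face being held fixed by the constant-regime hypothesis. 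Some care is needed for active routes carrying zero flow, but these remain feasible and the argument proceeds on the relatively open cell determined by $\regime$, yielding $\var\mapsto\eq{\loadprof}(\rateprof(\var))\in\mathcal C^{1}$.

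The remaining assertions of \ref{th:differentiability-on-curve-a} then follow by composition. The equilibrium edge costs $\eqcostedge_{\edge}=\cost_{\edge}(\eq{\load}_{\edge})$ are $\mathcal C^{1}$ as compositions of $\mathcal C^{1}$ maps; choosing any fixed representative route $\route\in\regime\cap\routes^{(\odpair)}$ — legitimate precisely because $\regime$ is constant — the equilibrium cost $\eqcost^{(\odpair)}(\rateprof(\var))=\sum_{\edge\in\route}\eqcostedge_{\edge}(\rateprof(\var))$ is $\mathcal C^{1}$; and $\eq{\SC}(\rateprof(\var))=\sum_{\odpair}\rate^{(\odpair)}(\var)\,\eqcost^{(\odpair)}(\rateprof(\var))$ is $\mathcal C^{1}$ as a finite sum of products of $\mathcal C^{1}$ functions.

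For \ref{prop:PoA-regularity}, the convexity of $\load_{\edge}\mapsto\load_{\edge}\cost_{\edge}(\load_{\edge})$ turns $\opt{\SC}(\rateprof)=\min_{\loadprof\in\loads_{\rateprof}}\sum_{\edge}\load_{\edge}\cost_{\edge}(\load_{\edge})$ into a parametric convex program, equivalently the potential value of the auxiliary routing game with marginal costs $\activ{\cost}_{\edge}(\load_{\edge})=\cost_{\edge}(\load_{\edge})+\load_{\edge}\cost_{\edge}'(\load_{\edge})$, which are continuous and nondecreasing. I would deduce smoothness of $\opt{\SC}(\rateprof(\var))$ from the envelope theorem for convex programs: the value is differentiable with $\tfrac{d}{d\var}\opt{\SC}=\sum_{\odpair}\opt{\eqcost}^{(\odpair)}\,\dot{\rate}^{(\odpair)}(\var)$, where $\opt{\eqcost}^{(\odpair)}$ is the optimal demand multiplier, equal to the minimal marginal path cost for $\odpair$ at the optimum. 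The obstacle here is the uniqueness and continuity of this multiplier when the optimal load is not unique; one circumvents it by noting that the auxiliary game admits a strictly convex (Fukushima-type) dual, so its equilibrium edge costs, and hence the multipliers $\opt{\eqcost}^{(\odpair)}$, are uniquely determined and depend continuously on $\var$. Since $\dot{\rate}^{(\odpair)}$ is continuous, the displayed derivative is continuous, so $\opt{\SC}(\rateprof(\var))\in\mathcal C^{1}$. Finally, as $\opt{\SC}(\rateprof(\var))>0$ for positive demand, $\PoA(\rateprof(\var))=\eq{\SC}(\rateprof(\var))/\opt{\SC}(\rateprof(\var))$ is a quotient of $\mathcal C^{1}$ functions with non-vanishing denominator, hence $\mathcal C^{1}$.
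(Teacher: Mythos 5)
Your proposal is correct and follows essentially the same route as the paper: fix the locally constant regime, pass to the sign-relaxed problem over the affine subspace of flows supported on $\regime$, and apply the implicit function theorem to the KKT system using positive definiteness of $\diag(\cost_{\edge}'(\load_{\edge}))$ together with a full-rank reduction of the redundant flow variables; part (b) likewise matches the paper's treatment via the marginal-cost game and the convex $\mathcal C^{1}$ value function whose gradient is the (unique, continuous) vector of equilibrium costs from the strictly convex Fukushima dual. The only cosmetic difference is that the paper makes explicit two points you leave implicit, namely the extension of the cost functions to negative loads so that the relaxed problem is coercive, and the verification that its unique minimizer coincides with the Wardrop equilibrium whenever the active regime equals $\regime$.
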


The simplest situation
where the active regime  $\activ{\routes}(\rate(\var))$ is constant on a neighborhood of $\var^0$ is when every path of minimal cost at demand $\rate(\var^0)$ carries a strictly positive flow. In this case, by continuity of the equilibrium costs, we are assured that the active regime will not change after small changes in the demands.

The proof of \cref{th:differentiability-on-curve} exploits the regularity properties of the equilibrium costs and optimum social cost stated below in \cref{lem:continuity,lem:optimum-social-cost-smooth}.
\cref{lem:continuity}  extends \citet[proposition 3.1]{ComDosScar:MathPRog2021} to the multi-OD setting, as well as \citet{Hal:TS1978}, who considered the case of strictly increasing costs. The proof of \cref{lem:continuity} can be found in \citet[proposition~1]{ComDosSca:EJOR2024}, while  \cref{lem:optimum-social-cost-smooth}  is proved in \cref{se:supplementary}.

\begin{lemma}\label{lem:continuity}
Let $(\graph,\costprof,\ODpairs)$ be a routing game structure. 
Then,  the equilibrium edge costs $\rateprof\mapsto\eqcostedge_{\edge}(\rateprof)$ are uniquely defined and continuous. 
Moreover, the equilibrium costs $\rateprof\mapsto\eqcost^{(\odpair)}(\rateprof)$ are continuous.  
\end{lemma}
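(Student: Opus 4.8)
The plan is to build on the variational characterization \eqref{eq:min-load}: the equilibrium loads at a demand $\rateprof$ are exactly the minimizers of the convex potential $\potential(\loadprof)=\sum_{\edge\in\edges}\Cost_\edge(\load_\edge)$ over $\loads_\rateprof$. Uniqueness of the edge costs is essentially already available: Fukushima's dual \eqref{eq:Fukushima} is strictly convex, so its minimizing vector $\eqcostedgeprof$ is unique, which is precisely the assertion that $\eqcostedge_\edge(\rateprof)$ is well defined. (Alternatively, if $\loadprof^{1},\loadprof^{2}$ both minimize $\potential$, then $\potential$ is affine along the segment joining them; since each $\Cost_\edge$ is convex, each must be affine between $\load^{1}_\edge$ and $\load^{2}_\edge$, whence $\cost_\edge(\load^{1}_\edge)=\cost_\edge(\load^{2}_\edge)$, so all equilibria share the same edge costs.) It remains to prove continuity, and for this I would argue by sequences.

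Fix $\rateprof$, take $\rateprof^n\to\rateprof$, and write $\rate^{(\odpair)}_n$ and $\rate^{(\odpair)}$ for their $\odpair$-components. Since any feasible flow obeys $0\le\flow_\route\le\rate^{(\odpair)}$ for $\route\in\routes^{(\odpair)}$, the sets $\flows_{\rateprof^n}$ are uniformly bounded for $n$ large. Choosing an equilibrium pair $(\flowprof^n,\loadprof^n)\in\flows_{\rateprof^n}$ for each $n$ and passing to a subsequence, I may assume $\flowprof^n\to\flowprof^{\ast}$, hence $\loadprof^n\to\loadprof^{\ast}$; passing to the limit in the linear relations \eqref{eq:matricesConstraints} and using $\flowprof^n\ge\zerovec$ yields $(\flowprof^{\ast},\loadprof^{\ast})\in\flows_\rateprof$.

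The crux is to show that $\loadprof^{\ast}$ minimizes $\potential$ over $\loads_\rateprof$, i.e.\ is an equilibrium load at the limit demand. Suppose not, so some $\tilde\loadprof\in\loads_\rateprof$, carried by a feasible flow $\tilde\flowprof$, satisfies $\potential(\tilde\loadprof)<\potential(\loadprof^{\ast})$. I would manufacture feasible approximations $\tilde\flowprof^n$ for $\rateprof^n$ with $\tilde\flowprof^n\to\tilde\flowprof$ by rescaling commodity by commodity: on each $\odpair$ with $\rate^{(\odpair)}>0$ set $\tilde\flow^n_\route=\tilde\flow_\route\,\rate^{(\odpair)}_n/\rate^{(\odpair)}$, while on each $\odpair$ with $\rate^{(\odpair)}=0$ (where necessarily $\tilde\flow_\route=0$) place the vanishing mass $\rate^{(\odpair)}_n$ on a single fixed path of $\routes^{(\odpair)}$. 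Then $\tilde\flowprof^n\to\tilde\flowprof$, so $\tilde\loadprof^n\coloneqq\edgepaths\tilde\flowprof^n\to\tilde\loadprof$, and continuity of $\potential$ gives $\potential(\tilde\loadprof^n)\to\potential(\tilde\loadprof)<\potential(\loadprof^{\ast})=\lim_n\potential(\loadprof^n)$, contradicting the minimality of $\loadprof^n$ for large $n$. Hence $\loadprof^{\ast}$ is an equilibrium load at $\rateprof$.

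With this in hand, continuity of each $\cost_\edge$ together with the uniqueness of the equilibrium edge cost give $\eqcostedge_\edge(\rateprof^n)=\cost_\edge(\load^n_\edge)\to\cost_\edge(\load^{\ast}_\edge)=\eqcostedge_\edge(\rateprof)$ along the chosen subsequence. Because every convergent subsequence of the bounded sequence $\bigl(\eqcostedge_\edge(\rateprof^n)\bigr)_n$ produces this same limit, the whole sequence converges, proving that $\rateprof\mapsto\eqcostedge_\edge(\rateprof)$ is continuous. The equilibrium costs then inherit continuity at once, since $\eqcost^{(\odpair)}(\rateprof)=\min_{\route\in\routes^{(\odpair)}}\sum_{\edge\in\route}\eqcostedge_\edge(\rateprof)$ is a minimum of finitely many sums of continuous functions. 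I expect the only genuinely delicate step to be the rescaling construction of the third paragraph---that is, the lower hemicontinuity of the feasible correspondence $\rateprof\mapsto\loads_\rateprof$, with the degenerate pairs where $\rate^{(\odpair)}=0$ forcing the separate placement of mass; everything else is compactness, combined with the key fact that, although the equilibrium loads themselves may fail to be unique, the equilibrium edge costs are unique, which is exactly what lets the subsequence argument pin down a single limit.
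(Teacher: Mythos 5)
Your argument is correct, and it is the standard one: uniqueness of the equilibrium edge costs from strict convexity of Fukushima's dual (or from affineness of $\potential$ along the segment joining two minimizers), plus a compactness-and-lower-hemicontinuity argument showing that limits of equilibria are equilibria, with uniqueness of the edge costs collapsing the subsequence argument to full convergence. The paper itself does not prove this lemma but defers to \citet[proposition~1]{ComDosSca:EJOR2024}, which proceeds along essentially these same lines, so your proposal fills that gap faithfully; the only step deserving the care you already gave it is the per-commodity rescaling handling the degenerate pairs with $\rate^{(\odpair)}=0$.
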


\begin{lemma}
\label{lem:optimum-social-cost-smooth}
Let $(\graph,\costprof,\ODpairs)$ be a routing game structure. 
If the cost functions $\cost_{\edge}$ are $\mathcal C^{1}$ and the functions $\load_{\edge}\mapsto\load_{\edge}\cost_{\edge}(\load_{\edge})$ are convex, then the optimum social cost function $\rateprof\mapsto\opt{\SC}(\rateprof)$ is convex and $\mathcal C^{1}$ everywhere.
\end{lemma}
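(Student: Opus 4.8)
The plan is to recast the optimum social cost as a Wardrop equilibrium problem for an auxiliary ``marginal cost'' game, and then to read off the gradient of $\opt{\SC}$ from the Lagrange multipliers of that problem. Concretely, set $g_{\edge}(\load_{\edge}) \coloneqq \load_{\edge}\cost_{\edge}(\load_{\edge})$. By hypothesis each $g_{\edge}$ is convex, and since $\cost_{\edge}\in\mathcal{C}^{1}$ it is also $\mathcal{C}^{1}$, with \emph{marginal cost} $\activ{\cost}_{\edge}(\load_{\edge})\coloneqq g_{\edge}'(\load_{\edge}) = \cost_{\edge}(\load_{\edge}) + \load_{\edge}\cost_{\edge}'(\load_{\edge})$. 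Because $g_{\edge}$ is convex and $\mathcal{C}^{1}$, the function $\activ{\cost}_{\edge}\colon[0,\infty)\to[0,\infty)$ is continuous and nondecreasing, so $(\graph,\activ{\costprof},\ODpairs)$ is again a routing game structure. Since $g_{\edge}=\int_{0}^{\load_{\edge}}\activ{\cost}_{\edge}$, the minimization \eqref{eq:optimum-social-cost} defining $\opt{\SC}(\rateprof)$ is exactly the potential minimization \eqref{eq:min-load} for this auxiliary game; in particular its minimizers are the Wardrop equilibria of $(\graph,\activ{\costprof},\ODpairs)$ (the classical fact that the system optimum is the equilibrium under marginal cost tolls). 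Hence the tools already available---\cref{lem:continuity} and the dual program \eqref{eq:Fukushima}---apply verbatim to the auxiliary game.

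First I would prove convexity. Introduce the jointly defined function $\Psi(\flowprof,\rateprof)$ equal to $\sum_{\edge}g_{\edge}\bigl((\edgepaths\flowprof)_{\edge}\bigr)$ on the set $\{(\flowprof,\rateprof)\colon \sumflow\flowprof=\rateprof,\ \flowprof\ge 0\}$ and $+\infty$ elsewhere. The objective is a convex function of $\flowprof$ (composition of the convex $g_{\edge}$ with the linear map $\edgepaths$), and the feasible set is the intersection of the graph of a linear map with a cone, hence convex; thus $\Psi$ is jointly convex in $(\flowprof,\rateprof)$. Since $\opt{\SC}(\rateprof)=\inf_{\flowprof}\Psi(\flowprof,\rateprof)$ is the partial infimum (marginal function) of a jointly convex function, it is convex, and it is finite on $\Rpos^{\nOD}$ because every $\routes^{(\odpair)}$ is nonempty and the objective is continuous.

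Next I would compute the gradient via Lagrangian duality. As the constraints $\sumflow\flowprof=\rateprof$, $\flowprof\ge 0$ are polyhedral, strong duality holds without further qualification, and the sensitivity theorem for convex programs identifies $\partial\,\opt{\SC}(\rateprof)$ with the set of optimal multipliers $\boldsymbol{\lambda}=(\lambda^{(\odpair)})_{\odpair\in\ODpairs}$ for the demand constraints. Writing the stationarity and complementarity conditions for an optimal flow $\flowprof^{\ast}$ with loads $\opt{\loadprof}$ shows that $\sum_{\edge\in\route}\activ{\cost}_{\edge}(\opt{\load}_{\edge})\ge\lambda^{(\odpair)}$ for every $\route\in\routes^{(\odpair)}$, with equality whenever $\flow_{\route}^{\ast}>0$; these are precisely the Wardrop conditions \eqref{eq:Wardrop} for $(\graph,\activ{\costprof},\ODpairs)$, so $\lambda^{(\odpair)}$ must equal the equilibrium cost of $\odpair$ in the auxiliary game, namely $\min_{\route\in\routes^{(\odpair)}}\sum_{\edge\in\route}\activ{\cost}_{\edge}(\opt{\load}_{\edge})$. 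By \cref{lem:continuity} applied to $(\graph,\activ{\costprof},\ODpairs)$ the equilibrium edge costs $\activ{\cost}_{\edge}(\opt{\load}_{\edge})$, and hence these equilibrium OD costs, are uniquely defined and continuous in $\rateprof$. Consequently $\partial\,\opt{\SC}(\rateprof)$ is the single point whose $\odpair$-th coordinate is that equilibrium OD cost; a finite convex function with a singleton subdifferential is differentiable, and since the gradient map is continuous (again \cref{lem:continuity}), $\opt{\SC}$ is $\mathcal{C}^{1}$.

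The step I expect to be the main obstacle is the rigorous identification $\partial\,\opt{\SC}(\rateprof)=\{\text{optimal multipliers}\}$ together with the verification that these multipliers are unique. The uniqueness is exactly what \cref{lem:continuity} (through the strictly convex dual \eqref{eq:Fukushima}) delivers once the problem is recast in the auxiliary game, so the crux is to justify the sensitivity formula carefully---checking that the value function is proper and that the polyhedral structure supplies the needed multipliers---and to translate the KKT system into the Wardrop conditions \eqref{eq:Wardrop} with the correct signs. A minor additional point is the behavior on the boundary $\{\rate^{(\odpair)}=0\}$ of $\Rpos^{\nOD}$, where differentiability should be read in the one-sided sense; the continuity of the equilibrium-cost map furnished by \cref{lem:continuity} is what lets the interior conclusion extend up to the boundary.
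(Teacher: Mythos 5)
Your proposal is correct and follows essentially the same route as the paper's proof: recast $\opt{\SC}$ as the optimal value of the Beckmann potential for the marginal-cost game $(\graph,\opt{\costprof},\ODpairs)$ with $\opt{\cost}_{\edge}(\load)=\cost_{\edge}(\load)+\load\,\cost_{\edge}'(\load)$, then identify its gradient with the (unique, continuous) equilibrium costs via \cref{lem:continuity}. The only difference is that you spell out the convexity-by-partial-infimum and the duality/sensitivity identification that the paper leaves implicit in one sentence.
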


In addition to these lemmas, our proof of \cref{th:differentiability-on-curve} involves the analysis of some auxiliary minimization problems without sign constraints on the variables. 
In order to properly formulate these problems, we extend the edge cost functions to the whole $\mathbb{R}$ in such a way that they remain $\mathcal{C}^1$ with strictly positive derivatives and with $\lim_{\load\to-\infty}\cost_{\edge}(\load)<0$ for all edges $\edge\in\edges$. 
Then, for each fixed regime $\regime$, we consider the auxiliary problems:
\begin{equation}
\label{eq:Beckmann-fixed-regime-relaxed}
\min_{\loadprof\in\loads_{\rateprof}^{\regime}}\potential(\loadprof)
\tag{$\prim_{\rateprof}^{\regime}$}
\end{equation}
where the feasible set $\loads^\regime_{\rateprof}$
comprises all \emph{signed} load vectors
$\loadprof \in \mathbb{R}^{\edges}$ induced by some flow  $\flowprof\in\mathbb{R}^{\routes}$ with support in $\regime$, that is
\begin{equation}
\label{eq:X-mu-R}    
\loads^\regime_{\rateprof} =
\braces*{\loadprof \in \mathbb{R}^{\edges} \colon (\exists \flowprof\in\mathbb{R}^\routes) \text{ s.t. } \sumflow \flowprof=\rateprof,\;\loadprof=\edgepaths \flowprof,\text{ and }\flow_{\route}=0\text{ for all } \route\notin\regime}.
\end{equation}

 \cref{lem:differentiability-of-fixed-regime-possibly-negative-flows}~\ref{it:lem:differentiability-of-fixed-regime-possibly-negative-flows-b} in \cref{se:supplementary} shows that
when the costs  $\cost_{\edge}$ are $\mathcal C^{1}$ with strictly positive derivative, then 
\eqref{eq:Beckmann-fixed-regime-relaxed} has a unique optimal solution $\loadprof_{\regime}(\rateprof)$ and the map $\rateprof\rightarrow\loadprof_{\regime}(\rateprof)$ is of class $\mathcal C^{1}$. Moreover, 
\cref{lem:differentiability-of-fixed-regime-possibly-negative-flows}~\ref{it:lem:differentiability-of-fixed-regime-possibly-negative-flows-a} provides explicit expressions for the Lagrange multipliers, whose uniqueness is used in the proof of \cref{th:WatlingSC}~\ref{it:th:WatlingSC-2}.

Let us insist that in \eqref{eq:Beckmann-fixed-regime-relaxed} the regime $\regime$ is fixed and
 the variables have no sign constraints, so this problem is a relaxation of \eqref{eq:min-load} and we might have $\flow_\route<0$ 
 for some paths $\route\in\routes^{(\odpair)}\cap\regime$. Moreover, 
 although for every optimal solution $\loadprof$  of \eqref{eq:Beckmann-fixed-regime-relaxed} and each OD pair $\odpair$ the costs of all paths $\route\in\routes^{(\odpair)}\cap\regime$ turn out to be equal to some common value $\multod^{(\odpair)}$, it can happen that  some paths $\route\in\routes^{(\odpair)}\setminus\regime$ may have a 
 smaller cost $\cost_{\route}(\loadprof)<\multod^{(\odpair)}$.
 However, if for some demand vector $\rateprof$ we have an optimal solution with
$\flow_\route\geq 0$ for all $\route\in\routes^{(\odpair)}\cap\regime$ and $\cost_{\route}(\loadprof)\geq\multod^{(\odpair)}$ for all  $\route\in\routes^{(\odpair)}\setminus\regime$, then this is also an optimal solution of the 
original problem \eqref{eq:min-load} and therefore it corresponds to a Wardrop equilibrium.

\begin{proof}[Proof of \cref{th:differentiability-on-curve}]
\ref{it:lem:differentiability-of-fixed-regime-possibly-negative-flows-a} A sufficient condition for
a point $\loadprof\in\loads^{\regime}_{\rateprof}$ to be an optimal solution
of \eqref{eq:Beckmann-fixed-regime-relaxed}, is the existence of multipliers $(\multodprof,\multedgeprof,\multpathprof)$ such that
\begin{align*}
\cost_{\edge}(\load_{\edge})
&=\multedge_{\edge} &&\forall\edge\in\edges,\\
\multod^{(\odpair)}
&=\sum_{\edge\in\route}\multedge_{\edge} &&\forall\odpair\in\ODpairs\;\forall\route\in\routes^{(\odpair)}\cap\regime,\\
\multod^{(\odpair)}
&=\sum_{\edge\in\route}\multedge_{\edge}-\multpath_{\route} &&\forall\odpair\in\ODpairs\;\forall\route\in\routes^{(\odpair}\setminus\regime.
\end{align*}
For any demand $\rateprof$ the  equilibrium load vector $\loadprof^*(\rateprof)$ satisfies these conditions for
 $\regime=\activ{\routes}(\rateprof)$, with $\multedge_{\edge}=\eqcostedge_{\edge}=\cost_{\edge}(\eq\load_{\edge})$, $\multod^{(\odpair)}=\eqcost^{(\odpair)}(\rateprof)$,
and $\multpath_{\route}=\sum_{\edge\in\route}\eqcostedge_{\edge}-\eqcost^{(\odpair)}(\rateprof)$ for  $\route\not\in\regime$. 
This shows that the equilibrium load $\loadprof^{*}(\rateprof)$ coincides with the unique optimum solution $\loadprof_{\regime}(\rateprof)$ of \eqref{eq:Beckmann-fixed-regime-relaxed} for $\regime=\activ{\routes}(\rateprof)$.
Now, by assumption $\activ{\routes}(\rateprof(\var))\equiv\regime$ is constant for $\var$ near $\var_0$, so that $\eq{\loadprof}(\rateprof(\var))=\loadprof_\regime(\rateprof(\var))$, which is
continuously differentiable as a composition of the $\mathcal{C}^{1}$ demand function $\var\mapsto\rateprof(\var)$ and the map $\rateprof\mapsto\loadprof_{\regime}(\rateprof)$,
which is also $\mathcal{C}^{1}$ in view of 
 \cref{lem:differentiability-of-fixed-regime-possibly-negative-flows}\!
 \ref{it:lem:differentiability-of-fixed-regime-possibly-negative-flows-b}.

\medskip
\noindent
\ref{it:lem:differentiability-of-fixed-regime-possibly-negative-flows-b} The smoothness of 
$\var\mapsto\opt{\SC}(\rateprof(\var))$
is a direct consequence of \cref{lem:optimum-social-cost-smooth}. This, combined with 
part (a), implies the smoothness 
of $\var\mapsto\PoA(\rateprof(\var))=\eq{\SC}(\rateprof(\var))/\opt{\SC}(\rateprof(\var))$.
\end{proof}

\begin{remark}
\label{re:strictly-pos}
\begin{enumerate}[label={\textup{(\alph*)}}, ref={\textup{(\alph*)}}]
\item 
Observe that \cref{th:differentiability-on-curve} requires the active regime $\activ{\routes}(\rateprof(\var))$ to be constant near $\var^0\in\Rpos$ but only along the demand curve, and not necessarily in a neighborhood of $\rateprof(\var^0)$ in the ambient space $\Rpos^{\ODpairs}$. 
This covers some special situations in which the curve $\rateprof(\argdot)$ may slide along the boundary between two regions with different active regimes. 

\item
Invoking \cref{rm:differentiability-with-cycles-condition} in the \cref{se:supplementary}, the strict positivity of the cost derivatives in \cref{th:differentiability-on-curve} can be slightly weakened by assuming that the edge costs $\cost_{\edge}(\argdot)$ are just strictly increasing and the set of edges $\edge$ with $\cost_{\edge}'(\load_{\edge}(\rateprof(\var)))=0$ do not contain undirected cycles.
\end{enumerate}
\end{remark}

The following examples show that the assumptions in \cref{th:differentiability-on-curve} are close to minimal: the presence of a single edge with a  non-convex cost whose derivative vanishes at a single point, may give rise to non-smooth equilibrium loads, even if the set of optimum paths is constant around the reference point. 
In \cref{ex:nondifferentiable-flow-with-no-breakpoint,ex:non-differentiable-2} below, the point where the cost functions have zero derivative is not the origin. 
Whether \cref{th:differentiability-on-curve} can be extended to cost functions that exhibit a zero derivative only at the origin remains an open problem. 
This would be of major interest since this feature is shared by  \ac{BPR} cost functions.

\begin{example}
\label{ex:nondifferentiable-flow-with-no-breakpoint}

In this example with strictly increasing $\mathcal C^{1}$ cost functions, the equilibrium flow is nondifferentiable at some point which is not a breakpoint. Consider a two link parallel network where the cost functions (plotted in \cref{fig:non_differentiable_1}) are
\begin{align}
\label{eq:c1}
\cost_1(\load)
&=\begin{cases}
-(\load-1)^2+1&\text{if }\load\le 1,\\
(\load-1)^2+1&\text{if }\load> 1,
\end{cases}\\
\label{eq:c2}
\cost_2(\load)
&=\begin{cases}
-(\load-1)^2+1&\text{if }\load\le 1,\\
2(\load-1)^2+1&\text{if }\load> 1.
\end{cases}\
\end{align}

\begin{figure}
\begin{center}
\includegraphics[width=7cm]{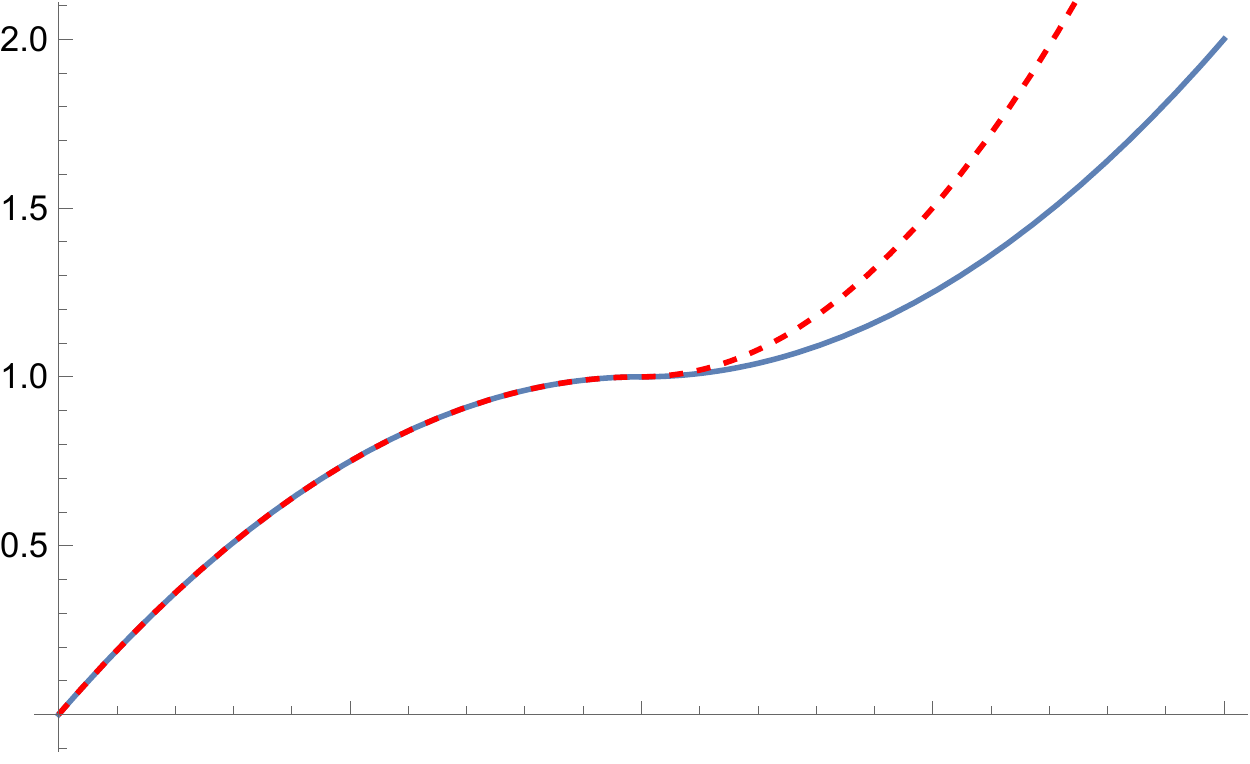}
\caption{
Plot of the cost functions in \eqref{eq:c1} (in blue) and \eqref{eq:c2} (in dotted red). 
The two functions are $\mathcal C^1$ but not convex.
}
\label{fig:non_differentiable_1}
\end{center}
\end{figure}

These functions are $\mathcal C^{1}$ and their derivatives vanish at $\load=1$. 
The equilibrium flows are:
\begin{center}

\def\arraystretch{2}
\begin{tabular}{c|c|c}
Interval &  $\load_{1}$ & $\load_{2}$\\
\hline
$\rate\in[0,2)$ & $\rate/2$ & $\rate/2$ \\
\hline
$\rate\in[2,+\infty)$ & $\frac{\sqrt{2}\cdot\rate-\sqrt{2}+1}{\sqrt{2}+1}$ & $\frac{\rate+\sqrt{2}-1}{\sqrt{2}+1}$ \\
\end{tabular}

\end{center}
Both flows are nondifferentiable at $\rate=2$. 
\end{example}

\begin{example}
\label{ex:non-differentiable-2}
Nonsmoothness can also be observed at a point where a flow vanishes. 
Consider the Wheatstone network in \cref{fig:braess-nondifferentiable} with
\begin{align*}
\cost_{1}(\load) &= \load,\\    
\cost_{2}(\load) &= \load,\\ 
\cost_{3}(\load) &= 
\begin{cases}
-\frac 1{10}(\load-1)^2+1 & \load\le 1,\\
10(\load-1)^2+1 & \load>1,
\end{cases}\\
\cost_{4}(\load) &= 
\begin{cases}
-(\load-1)^2+1 & \load\le 1,\\ 
(\load-1)^2+1 & \load>1,
\end{cases}\\
\cost_{5}(\load) &= \load^{2}.   
\end{align*}
\begin{figure}
\begin{center}
 \includegraphics[width=7cm]{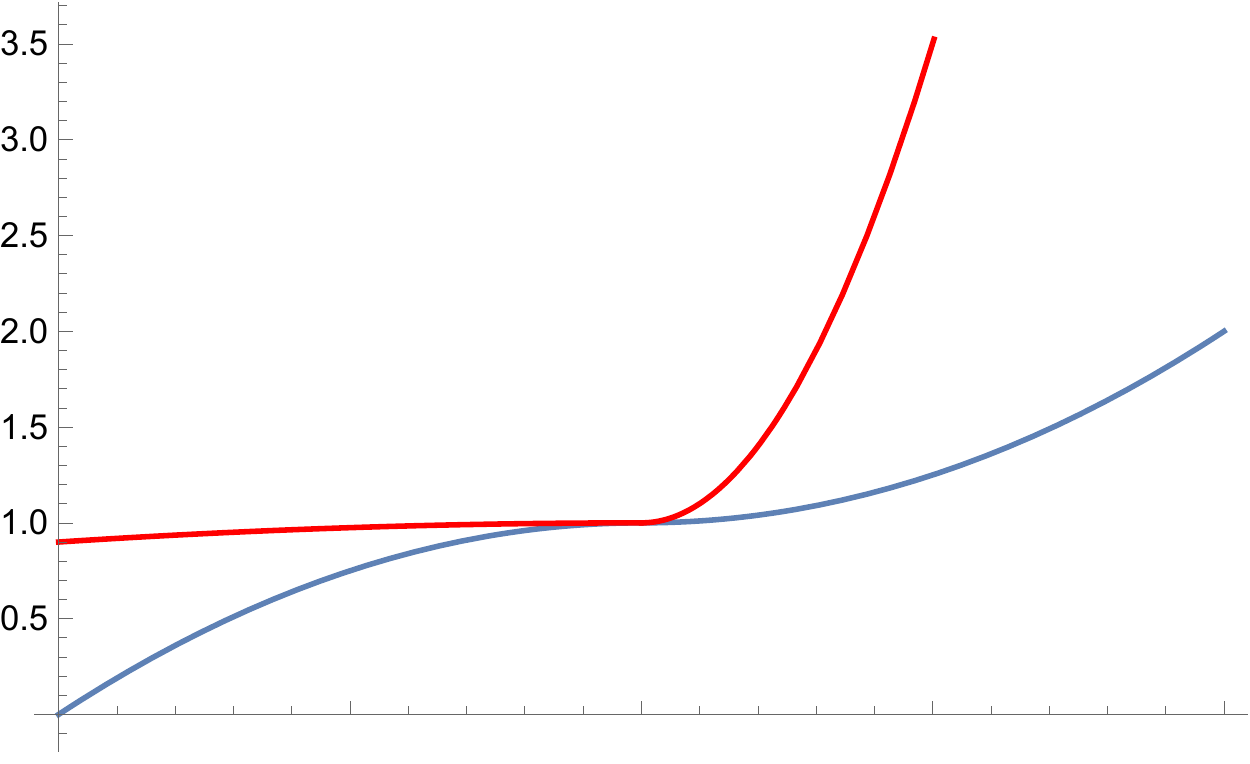}
\caption{
Plot of the cost functions $c_3$ (in red) and $c_4$ (in blue) in \cref{ex:non-differentiable-2}. 
The two functions are $\mathcal C^1$ but not convex.
}
\label{fig:non_differentiable-2}
\end{center}
\end{figure}
The plots of $\cost_3$ and $\cost_4$ can be found in \cref{fig:non_differentiable-2}.
All three paths are active in a neighborhood of $\rate=2$.
When the demand approaches $2$ from below, the flow on the zig-zag path decreases, and it vanishes when $\rate=2$, whereas it increases with positive derivative after $2$. 
A plot of the flow on the zig-zag path for $\rate\in[1,3]$ is shown in \cref{fig:braess-nondiff-flow}.

\begin{figure}[ht]
\setcounter{subfigure}{0}
\subfigure[]
{\scalebox{0.85}
{
\begin{tikzpicture}
   \node[shape=circle,draw=black,line width=1pt,minimum size=0.5cm] (v1) at (-3,0)  { $\source$}; 
   \node[shape=circle,draw=black,line width=1pt,minimum size=0.5cm] (v2) at (0,1.5)  {$\vertex_{1}$}; 

   \node[shape=circle,draw=black,line width=1pt,minimum size=0.5cm] (v5) at (0,-1.5)  {$\vertex_{2}$}; 
   \node[shape=circle,draw=black,line width=1pt,minimum size=0.5cm] (v6) at (3,0)  {$\sink$}; 
    
   \draw[line width=1pt,->] (v1) to   node[midway,fill=white] {$\cost_{1}$} (v2);
   \draw[line width=1pt,->] (v1) to   node[midway,fill=white] {$\cost_{2}$} (v5);
   \draw[line width=1pt,->] (v2) to   node[midway,fill=white] {$\cost_{5}$} (v5);

   \draw[line width=1pt,->] (v2) to 
   node[midway,fill=white] {$\cost_{3}$}(v6); 
   
	\draw[line width=1pt,->] (v5) to   node[midway,fill=white] {$\cost_{4}$} (v6);

\end{tikzpicture}
\label{fig:braess-nondifferentiable}
}}
\hspace{1cm}
\subfigure[]
{
\includegraphics[width=0.35\textwidth]{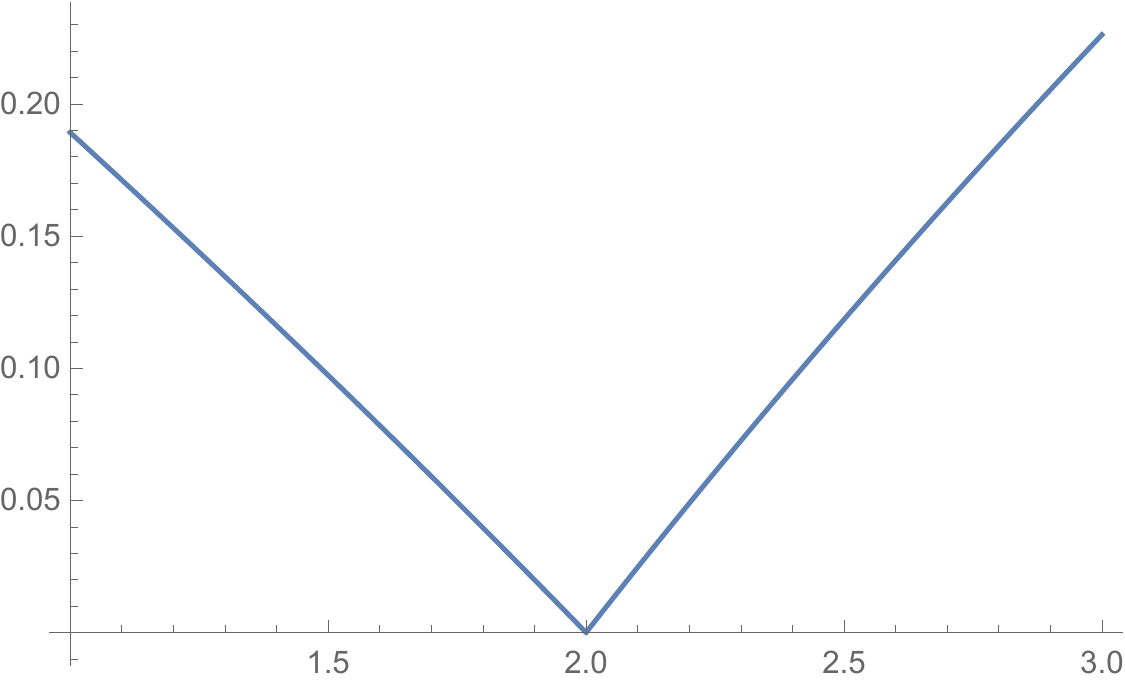}
\label{fig:braess-nondiff-flow}}
\caption{
On the left we have the classical Wheatstone network with the cost functions defined in \cref{ex:non-differentiable-2}. 
The plot on the right shows the load on the vertical link with demand varying in the interval $[1,3]$. Notice that at demand $2$ the load is zero and the load function is not differentiable.
}
\label{fig:Wheatstone}
\end{figure}
\end{example}

\section{Behavior around breakpoints}
\label{se:singleparameter}
In this section we describe the behavior of the functions $\eq{\SC}$ and $\PoA$ around breakpoints.
In particular we study their left and right derivatives 
when the demand  vector depends on a single real parameter. 
To formally describe  this  situation we make use of the following concept:

\begin{definition}
\label{def:breakpoints-on-curve}
We say that $\bar\var\in\Rpos$ is a \emph{$\activ{\routes}$-breakpoint} for $(\graph,\costprof,\ODpairs,\rateprof(\argdot))$ if there exist $\varepsilon>0$ such that $\activ{\routes}(\rateprof(\var))$ is constant as a function of $\var$ over each of the intervals $[\bar\var-\varepsilon,\bar\var)$ and $(\bar\var,\bar\var+\varepsilon]$ with $\activ{\routes}(\rateprof(\bar\var-\varepsilon))\ne\activ{\routes}(\rateprof(\bar\var+\varepsilon))$.
In this case, we use the symbols
\begin{equation}
\label{eq:left-right-regime} 
\activ{\routes}(\bar\var^{-})\coloneqq \activ{\routes}(\rateprof(\bar\var-\varepsilon)),
\quad
\activ{\routes}(\bar\var^{+})\coloneqq \activ{\routes}(\rateprof(\bar\var+\varepsilon)).
\end{equation}
\end{definition}

\citet{OHaConWat:TRB2016}, considered two regimes around a breakpoint and observed that, empirically, if one of these two regimes contains the other, then the derivatives of $\eq{\SC}$ and $\PoA$ associated to the smaller regime are larger than the derivatives in the larger regime.
They formalized this idea in their conjectures~4.5, 4.6, and 4.9. 
Using our language and notation, we can summarize these conjectures in the following statement, where we define  $g(\bar\var^{-}):=\lim_{\var\to\bar\var^{-}}g(\var)$ and $g(\bar\var^{+}):=\lim_{\var\to\bar\var^{+}}g(\var)$.

\begin{conjecture}[\citet{OHaConWat:TRB2016}]
\label{conj:Watling}
Let $(\graph,\costprof, \ODpairs, \rateprof(\argdot))$ be a routing game where the cost functions are continuous, differentiable, strictly increasing with positive second derivative, and the demand function $\var\rightarrow\rateprof(\var)$ is piecewise affine and componentwise nondecreasing with $\var$. If $\bar\var\in\Rpos$ is a $\activ{\routes}$-breakpoint, then the following hold:
\begin{enumerate}
[label={\textup{(\alph*)}}, ref={\textup{(\alph*)}}] 
\item
\label{eq:Watling-Inequality-1}
if $\activ{\routes}(\bar\var^{-})\subset\activ{\routes}(\bar\var^{+})$, then $\displaystyle(\eq{\SC}\circ\rateprof)'(\bar\var^{-})
>
(\eq{\SC}\circ\rateprof)'(\bar\var^{+})$ and
$(\PoA\circ\rateprof)'(\bar\var^{-})
>
(\PoA\circ\rateprof)'(\bar\var^{+})$;

\item
\label{eq:Watling-Inequality-2}
if $\activ{\routes}(\bar\var^{-})\supset\activ{\routes}(\bar\var^{+})$, then  $(\eq{\SC}\circ\rateprof)'({\bar\var}^{-})
<
(\eq{\SC}\circ\rateprof)'(\bar\var^{+})$ and
$(\PoA\circ\rateprof)'(\bar\var^{-})
<
(\PoA\circ\rateprof)'(\bar\var^{+})$,

\end{enumerate}

\end{conjecture}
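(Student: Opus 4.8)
The plan is to take the conjecture at face value and reduce both inequalities, for an arbitrary componentwise-nondecreasing piecewise-affine demand, to a single sign condition on a bilinear form, and then ask whether the stated hypotheses actually force that sign. Fix a $\activ{\routes}$-breakpoint $\bar\var$ and set $\regime^{-}=\activ{\routes}(\bar\var^{-})$, $\regime^{+}=\activ{\routes}(\bar\var^{+})$. On each half-neighborhood the active regime is constant, so, as in the proof of \cref{th:differentiability-on-curve}, the equilibrium there coincides with the unique solution of the relaxed problem \eqref{eq:Beckmann-fixed-regime-relaxed} and $\eqcost^{(\odpair)}(\rateprof(\var))$ equals the corresponding regime multiplier $\multod^{(\odpair)}_{\regime^{\pm}}(\rateprof(\var))$, which is $\mathcal C^{1}$ in $\rateprof$. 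From \eqref{eq:equilibrium-social-cost}, $\eq{\SC}(\rateprof(\var))=\sum_{\odpair}\rate^{(\odpair)}(\var)\,\multod^{(\odpair)}_{\regime^{\pm}}(\rateprof(\var))$ on the two sides, so each one-sided derivative at $\bar\var$ is the chain-rule value of this product.

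I would then isolate the jump. By \cref{lem:continuity} the equilibrium costs are continuous, so $\multod^{(\odpair)}_{\regime^{-}}=\multod^{(\odpair)}_{\regime^{+}}$ at $\bar\var$; taking $\bar\var$ to be a smooth point of the demand (so $\dot\rateprof$ is continuous, the remaining possibility being an unrelated kink of $\rateprof(\argdot)$), the term $\sum_{\odpair}\dot\rate^{(\odpair)}\multod^{(\odpair)}$ does not jump. Writing $\jac_{\regime}\coloneqq\grad_{\rateprof}\multodprof_{\regime}$ and using that $\multodprof_{\regime}=\grad_{\rateprof}\potential_{\regime}$ is the gradient of the convex value function $\potential_{\regime}(\rateprof)\coloneqq\min_{\loadprof\in\loads^{\regime}_{\rateprof}}\potential(\loadprof)$, the matrix $\jac_{\regime}=\grad^{2}_{\rateprof}\potential_{\regime}$ is symmetric positive semidefinite and the jump of $(\eq{\SC}\circ\rateprof)'$ at $\bar\var$ equals $\rateprof^{\top}(\jac_{\regime^{+}}-\jac_{\regime^{-}})\dot\rateprof$. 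For the \ac{PoA} I would invoke \cref{lem:optimum-social-cost-smooth}: $\opt{\SC}$ and $\opt{\SC}'$ are continuous at $\bar\var$, so differentiating $\PoA=\eq{\SC}/\opt{\SC}$ makes the jump of $(\PoA\circ\rateprof)'$ equal to $1/\opt{\SC}(\rateprof(\bar\var))>0$ times the jump of $(\eq{\SC}\circ\rateprof)'$; the two assertions therefore share the same sign and everything reduces to the bilinear form.

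To pin the sign of $\jac_{\regime^{+}}-\jac_{\regime^{-}}$ I would use nestedness of the feasible sets. In case \ref{eq:Watling-Inequality-1} ($\regime^{-}\subset\regime^{+}$) we have $\loads^{\regime^{-}}_{\rateprof}\subseteq\loads^{\regime^{+}}_{\rateprof}$, hence $\potential_{\regime^{+}}\le\potential_{\regime^{-}}$ with equality and equal gradient at $\rateprof(\bar\var)$; thus $\potential_{\regime^{-}}-\potential_{\regime^{+}}$ is minimized there and its Hessian is positive semidefinite, i.e.\ $\jac_{\regime^{+}}-\jac_{\regime^{-}}\preceq 0$. Case \ref{eq:Watling-Inequality-2} reverses every inequality.

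The step I expect to be the main obstacle, and where the conjecture as worded should break, is the passage from the definiteness of $M\coloneqq\jac_{\regime^{+}}-\jac_{\regime^{-}}$ to the sign of $\rateprof^{\top}M\dot\rateprof$. Negative semidefiniteness only controls $\uvec^{\top}M\uvec$; by polarization $\rateprof^{\top}M\dot\rateprof$ is a difference of two such quadratics and hence of undetermined sign, and the componentwise-nondecreasing hypothesis $\rateprof\ge\zerovec,\ \dot\rateprof\ge\zerovec$ does not rescue it, since a negative semidefinite $M$ may carry positive entries. The form collapses to a genuine quadratic exactly when $\dot\rateprof$ is parallel to $\rateprof$, i.e.\ proportional demand $\rateprof(\var)=\var\odrateprof$, in which case $\rateprof^{\top}M\dot\rateprof=\var^{-1}\rateprof^{\top}M\rateprof$ inherits the sign of $M$ and both inequalities follow (strictly, once one checks that $M$ does not annihilate $\rateprof$, which holds because a path entering or leaving the support has nonvanishing cost sensitivity). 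For genuinely non-proportional but componentwise-nondecreasing demand no such sign is available, so rather than completing a proof I would turn to constructing an explicit network with affine-but-non-proportional demand that realizes $\rateprof^{\top}M\dot\rateprof>0$ under $\regime^{-}\subset\regime^{+}$, thereby refuting \ref{eq:Watling-Inequality-1} in its stated generality.
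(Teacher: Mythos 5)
Your overall verdict coincides with the paper's: \cref{conj:Watling} as stated is \emph{false}, and what survives is the proportional-demand version, which the paper proves as \cref{th:WatlingSC} and refutes in general via \cref{ex:nonradial-smoothmaxPoA-counterWatling}. Your positive argument, however, takes a genuinely different route. The paper (\cref{le:GeneralWatlingLemma}, fed by the smooth flow selections of \cref{lem:pseudoinverse}) characterizes the one-sided quantities $\thfunc(\bar\var^{\pm})=\sum_{\odpair}\odrate^{(\odpair)}(\eqcost^{(\odpair)})'(\bar\var^{\pm})$ as optimal values of two convex quadratic programs, \eqref{eq:program_fprime_+} and \eqref{eq:program_fprime_-}, with a common objective $Q_{\bar\loadprof}$ and nested support constraints, so that the larger regime yields the smaller value. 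You instead differentiate the regime-restricted value functions $\rateprof\mapsto\valueV_{\regime^{\pm}}(\rateprof,\zerovec,\zerovec)$ of \cref{lem:differentiability-of-fixed-regime-possibly-negative-flows} twice, writing the jump of $(\eq{\SC}\circ\rateprof)'$ as the bilinear form $\bar\rateprof^{\top}(\jac_{\regime^{+}}-\jac_{\regime^{-}})\dot\rateprof$ and deducing semidefiniteness of the Hessian difference from nestedness of the feasible sets plus tangency at $\rateprof(\bar\var)$. The two arguments are cognate---the optimal value of the paper's quadratic program is exactly the quadratic form of your Hessian in the direction $\odrateprof$---but yours buys two things: it avoids path-flow selections altogether, and it isolates in a single formula \emph{why} proportionality is indispensable (the jump is a bilinear, not quadratic, form, which semidefiniteness signs only when $\dot\rateprof$ is parallel to $\bar\rateprof$), something the paper conveys only through its counterexample. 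One step you should not gloss over: the equality and tangency of the two value functions at $\rateprof(\bar\var)$ require showing that the unique equilibrium load $\bar\loadprof$ is optimal for \emph{both} relaxed problems, which follows by taking limits of equilibrium flows from each side (supported on $\regime^{-}$, resp.\ $\regime^{+}$) and using continuity of equilibrium costs to see that all paths of both regimes are active at $\bar\var$.

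Two genuine gaps remain. First, your refutation of the general statement is not completed: observing that a negative semidefinite matrix does not determine the sign of $\bar\rateprof^{\top}M\dot\rateprof$ shows that this \emph{proof strategy} breaks, not that the conjecture is false; an explicit instance must be exhibited. The paper supplies one (\cref{ex:nonradial-smoothmaxPoA-counterWatling}): Fisk's three-node network with demands $(1,\var,100)$ moving along the direction $(0,1,0)$, where at $\bar\var=11$ the regime expands and yet the one-sided derivatives of $\eq{\SC}\circ\rateprof$ jump \emph{up} from $112$ to $142$, and those of $\PoA\circ\rateprof$ from $0$ to $5/1852$, contradicting \ref{eq:Watling-Inequality-1}. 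Second, your parenthetical strictness claim (``$M$ does not annihilate $\rateprof$ because a path entering or leaving the support has nonvanishing cost sensitivity'') is not an argument, and the fact it stands for is not known under the conjecture's hypotheses: the paper obtains strict inequalities only for \emph{affine} costs (\cref{th:WatlingSC}~\ref{it:th:WatlingSC-2}), by invoking uniqueness of the Lagrange multipliers of \eqref{eq:Beckmann-fixed-regime-relaxed} and showing that the multiplier of the released constraint is strictly negative; whether equality can be ruled out for strictly convex or \ac{BPR} costs is explicitly left open there, and \cref{ex:Watling-equality} shows equality does occur once convexity is dropped. So, once the counterexample is actually constructed, your plan establishes the weak proportional-demand statement, i.e.\ \cref{th:WatlingSC}~\ref{it:th:WatlingSC-1}, but not the strict inequalities asserted in the conjecture.
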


As illustrated by the next example with affine costs and an affine demand function,
this conjecture does not hold in its most general form as stated above.
However, we will show later that a restricted form of the conjecture is indeed valid.

\begin{example}\label{ex:nonradial-smoothmaxPoA-counterWatling}

Consider the network in \cref{fig:modified-Fisk-1} studied in  \citet{Fis:TRB1979}, with affine cost functions and three \ac{OD} pairs whose demands and routes as shown in the next table:

\begin{center}
\begin{tabular}{c|c|l}
\ac{OD} &Demand& Feasible routes\\\hline
$(a,b)$ & $1$ & $\route_{1}=a\to b$\\
$(a,c)$ & $t$ & $\route_{2}=a\to c  \text{ and } \route_{3}=a\to b\to c$\\
$(b,c)$ & $100$ & $\route_4=b\to c$\\\hline
\end{tabular}
\end{center}

\vspace{1ex}
\noindent
where the map $\var\mapsto\rateprof(\var)$ is affine and only the demand of $(a,c)$ increases with $\var$. 

\begin{figure}[ht]
\setcounter{subfigure}{0}
\subfigure[Fisk's Network]
{
\begin{tikzpicture}[scale=0.7]
    \node[shape=circle,draw=black,line width=1pt] (v1) at (-4,0)  {$a$}; 
   \node[shape=circle,draw=black,line width=1pt] (v2) at (0,2.6)  { $b$}; 
   \node[shape=circle,draw=black,line width=1pt] (v6) at (4,0)  { $c$}; 
   \draw[line width=1pt,->] (v1) to   node[midway,fill=white] {$\load$} (v2);
   \draw[line width=1pt,->] (v1) to   node[midway,fill=white] {$\load+90$} (v6);
   \draw[line width=1pt,->] (v2) to   node[midway,fill=white] {$\load$} (v6);
\end{tikzpicture}
\label{fig:modified-Fisk-1}
}
\hspace{1cm}
\subfigure[\ac{PoA} for Fisk's Network with affine demand]
{
\includegraphics[width=0.4\textwidth]{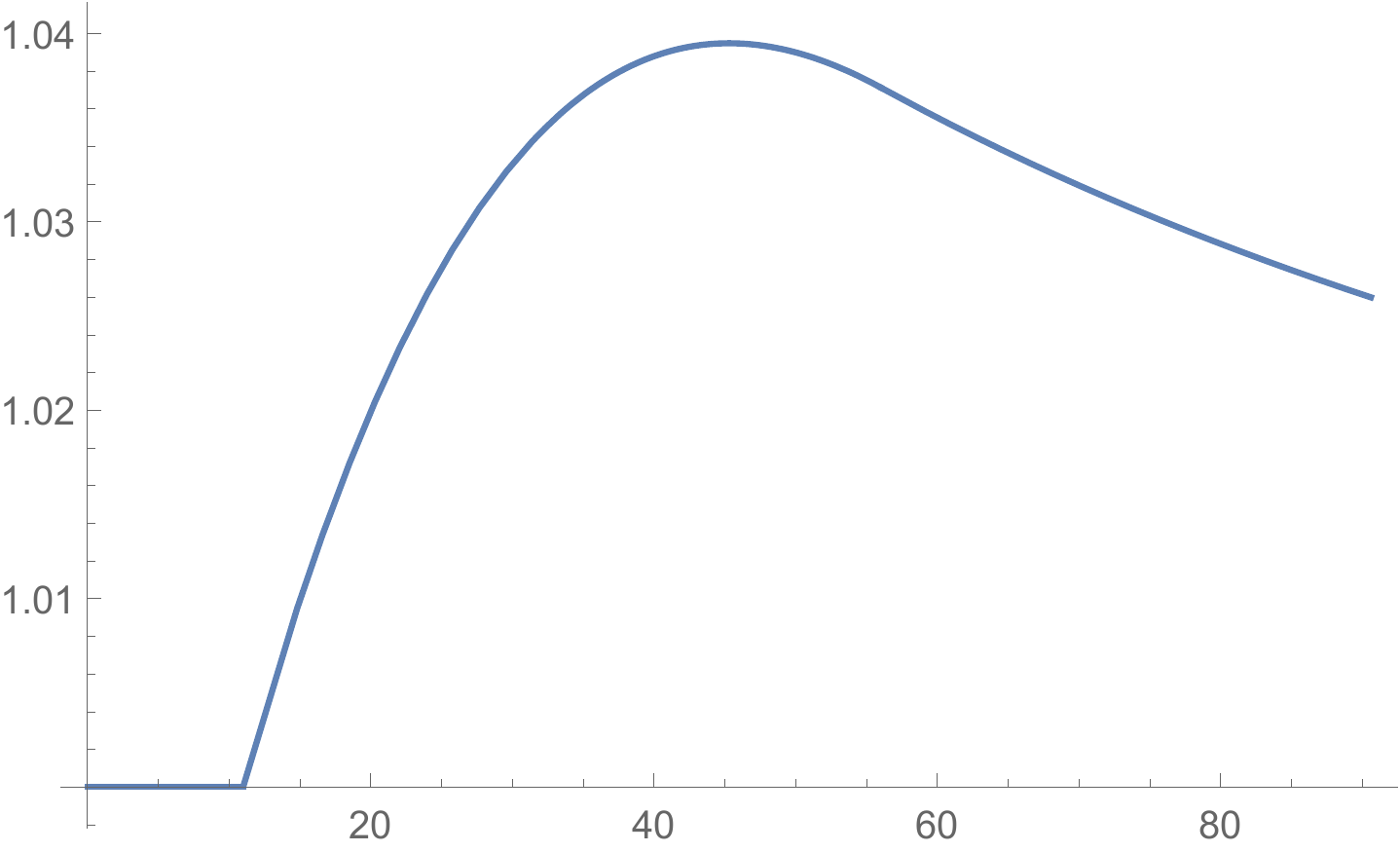}
\label{fig:modified-Fisk-2}}
\caption{An example where \cref{conj:Watling} does not hold. 
}
\label{fig:modified-Fisk}
\end{figure}

The following table shows the path flows and social cost at equilibrium 
 for  all $\var\geq 0$:

\begin{center}
\def\arraystretch{2}
\begin{tabular}{c|c|c|c|c|c}
Interval  
& $\eq{\flow}_{1}$ & $\eq{\flow}_{2}$ & $\eq{\flow}_{3}$ & $\eq{\flow}_4$ & $\eq{\SC}(\rate(\var))$ \\
\hline
$\var\in[0,11)$  
& 1 & $\var$ & 0 & 100 & $10001+90 \var+\var^2$\\
\hline
$\var\in[11,+\infty)$ 
& 1 & $\frac{1}3(11+2\var)$ & $\frac{1}3(\var-11)$ & 100 &$\frac{1}{3}(28892+382 \var+2 \var^2)$ 
\end{tabular}
\end{center}

\noindent whereas the optimum flow  and minimum social cost are given by

\begin{center}
\def\arraystretch{2}
\begin{tabular}{c|c|c|c|c|c}
Interval &  $\opt\flow_{1}$ & $\opt\flow_{2}$ & $\opt\flow_{3}$ & $\opt\flow_4$ & $\opt{\SC}(\rate(\var))$\\
\hline
$\var\in[0,56)$  & 1 & $\var$ & 0 & 100 &
$10001+90 \var+\var^2$ \\
\hline
$\var\in[56,+\infty)$ & 1 & $\frac{1}3(56+2\var)$ & $\frac{1}3(\var-56)$ & 100 &
$\frac{1}3(26867+382\var +2 \var^2)$.
\end{tabular}
\end{center}

\vspace{1ex}
\noindent
\cref{fig:modified-Fisk-2} shows the plot of the $\PoA$, which is
\begin{equation}\label{eq:PoA-modified-Fisk}
\PoA(\rate(\var))=
\begin{cases}
1 &\text{if }\var\in[0, 11),\\[3pt]
\dfrac{28892+382 \var+2 \var^2}{3(10001+90\var+\var^2)} &\text{if }\var\in[ 11,56),\\[10pt]

\dfrac{28892+382 \var+2 \var^2}{26867+382\var+2\var^2} &\text{if }\var\ge 56,
\end{cases}
\end{equation}
with derivative
\begin{equation}\label{eq:derivative-PoA-modified-Fisk}
(\PoA\circ\rate)'(\var)=
\begin{cases}
0 &\text{if }\var\in[0, 11),\\[3pt]
-\dfrac{2 (-610051 + 8890 \var + 101 \var^2)}{3 (10001 + 90 \var + \var^2)^2} &\text{if }\var\in[ 11,56),\\[10pt]

-\dfrac{4050 (191 + 2 \var)}{(26867 + 382 \var + 2 \var^2)^2} &\text{if }\var\ge 56.
\end{cases}
\end{equation}

At $\var=11$ the active regime experiences an expansion, where the route $\route_3$ becomes optimal for the pair $(a,c)$ at equilibrium. 
However, the right derivatives of $\eq{\SC}\circ\rateprof$ and $\PoA\circ\rateprof$ 
(respectively $142$ and $5/1852$)
are strictly larger than the respective left derivatives 
($112$ and $0$).
This provides a counter-example for \cref{conj:Watling}~\ref{eq:Watling-Inequality-1} as originally stated in \citet{OHaConWat:TRB2016}. Note also that, as predicted by \cref{lem:optimum-social-cost-smooth}, both $\opt{\SC}\circ\rateprof$ and $\PoA\circ\rateprof$ are smooth at $\var=56$, despite the presence of a phase transition in the social optimum.
For simplicity, in the example the demand was chosen  to move along the direction  $(0,1,0)$, however the counterexample also works for strictly positive directions  close to $(0,1,0)$.
\end{example}

We now proceed to prove that a restricted version of \cref{conj:Watling} is true, where the demand function is assumed to be linear, so that the proportion among the demands in each \ac{OD} pair is maintained constant as the total demand increases. 

\begin{theorem}
\label{th:WatlingSC}
Let $(\graph,\costprof, \ODpairs,\rateprof(\argdot))$ be a nonatomic routing game with a linear demand function
$\rateprof(\var)=\var\,\odrateprof$ for a fixed $\odrateprof\in\Rpos^{\ODpairs}$. Let $\bar\var\in\Rpos$ be a $\activ{\routes}$-breakpoint. 

\begin{enumerate}
[label={\textup{(\roman*)}}, ref={\textup{(\roman*)}}] 
\item
\label{it:th:WatlingSC-1}
If the edge cost functions are $\mathcal C^{1}$ with strictly positive derivatives,
 then properties \ref{eq:Watling-Inequality-1} and 
 \ref{eq:Watling-Inequality-2} in \cref{conj:Watling}  hold with weak inequalities.
    
\item
\label{it:th:WatlingSC-2}
If the edge costs are nondecreasing affine functions, then properties \ref{eq:Watling-Inequality-1} and 
 \ref{eq:Watling-Inequality-2} in  \cref{conj:Watling} hold with strict inequalities.
\end{enumerate}
\end{theorem}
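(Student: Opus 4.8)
The plan is to evaluate each one-sided derivative by working, on the corresponding side of $\bar\var$, with the fixed-regime relaxed problem \eqref{eq:Beckmann-fixed-regime-relaxed} and its smooth solution map $\var\mapsto\loadprof_{\regime}(\var\odrateprof)$. Write $\regime_{-}\coloneqq\activ{\routes}(\bar\var^{-})$ and $\regime_{+}\coloneqq\activ{\routes}(\bar\var^{+})$; in case \ref{eq:Watling-Inequality-1} we have $\regime_{-}\subsetneq\regime_{+}$ and in case \ref{eq:Watling-Inequality-2} the reverse inclusion, so it suffices to treat one inclusion. Note $\bar\var>0$, since a breakpoint needs an interval on each side. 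By \cref{lem:differentiability-of-fixed-regime-possibly-negative-flows}~\ref{it:lem:differentiability-of-fixed-regime-possibly-negative-flows-b} each map $\var\mapsto\loadprof_{\regime}(\var\odrateprof)$ is $\mathcal C^{1}$, and (as in the proof of \cref{th:differentiability-on-curve}) it coincides with the genuine equilibrium load on the side of $\bar\var$ where $\regime$ is active. By \cref{lem:continuity} the equilibrium edge costs, hence the loads on edges with $\cost_{\edge}'>0$ and the equilibrium costs $\eqcost^{(\odpair)}$, are continuous at $\bar\var$; thus the two fixed-regime extensions agree there, and the left (resp. right) derivative of $\eq{\SC}\circ\rateprof$ at $\bar\var$ equals the ordinary derivative of the smooth map $\var\mapsto\eq{\SC}_{\regime_{-}}(\var)$ (resp. $\var\mapsto\eq{\SC}_{\regime_{+}}(\var)$).

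Writing $\eq{\SC}_{\regime}(\var)=\var\,V_{\regime}(\var)$ with $V_{\regime}(\var)\coloneqq\sum_{\odpair\in\ODpairs}\odrate^{(\odpair)}\multod^{(\odpair)}_{\regime}(\var)$, where $\multod^{(\odpair)}_{\regime}$ is the common equilibrium cost of the paths of $\odpair$ in $\regime$, the central computation is to differentiate the stationarity conditions of \eqref{eq:Beckmann-fixed-regime-relaxed} along the ray $\rateprof(\var)=\var\odrateprof$. Using $\sumflow\flowprof'=\odrateprof$, $(\multod^{(\odpair)}_{\regime})'=\sum_{\edge\in\route}\multedge_{\edge}'$, $\multedge_{\edge}'=\cost_{\edge}'(\load_{\edge})\load_{\edge}'$ and $\load_{\edge}'=\sum_{\route\ni\edge}\flow_{\route}'$, I will collapse the sum and obtain the identity
\begin{equation*}
V_{\regime}'(\var)=\sum_{\edge\in\edges}\cost_{\edge}'(\load_{\edge})\,(\load_{\edge}')^{2}\ \ge\ 0 .
\end{equation*}
Since the equilibrium costs agree at $\bar\var$ we have $V_{\regime_{-}}(\bar\var)=V_{\regime_{+}}(\bar\var)$, so the difference of the one-sided derivatives of $\eq{\SC}\circ\rateprof$ equals $\bar\var\,\bigl(V_{\regime_{-}}'(\bar\var)-V_{\regime_{+}}'(\bar\var)\bigr)$. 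The linearised stationarity conditions identify $\loadprof_{\regime}'=\edgepaths\flowprof'$, where $\flowprof'$ minimises $\tfrac12\sum_{\edge}\cost_{\edge}'(\eq{\load}_{\edge}(\rateprof(\bar\var)))(\edgepaths\flowprof)_{\edge}^{2}$ over all flows supported in $\regime$ with $\sumflow\flowprof=\odrateprof$, and $V_{\regime}'(\bar\var)$ equals twice this minimum. As $\regime_{-}\subset\regime_{+}$ nests the feasible sets, the minimum over the larger regime is no larger, giving $V_{\regime_{+}}'(\bar\var)\le V_{\regime_{-}}'(\bar\var)$ and hence the weak inequalities of part \ref{it:th:WatlingSC-1} for $\eq{\SC}$. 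For the \ac{PoA}, \cref{lem:optimum-social-cost-smooth} makes $\opt{\SC}\circ\rateprof$ differentiable at $\bar\var$, so the difference of the one-sided derivatives of $\PoA=\eq{\SC}/\opt{\SC}$ equals $\bigl(\eq{\SC}'(\bar\var^{-})-\eq{\SC}'(\bar\var^{+})\bigr)/\opt{\SC}(\bar\var)$, which has the same sign since $\opt{\SC}(\bar\var)>0$; the inequality for $\PoA$ follows.

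For part \ref{it:th:WatlingSC-2} the costs are affine, so in each fixed regime the equilibrium load is affine in $\var$ and $\flowprof'$, $(\multodprof_{\regime})'$ are constant; the task is to upgrade the weak inequality to a strict one, i.e.\ to show $V_{\regime_{+}}'(\bar\var)<V_{\regime_{-}}'(\bar\var)$. Picking a path $\route^{\ast}\in\regime_{+}\setminus\regime_{-}$ with \ac{OD} pair $\odpair$, consider the affine cost gap $\Delta(\var)\coloneqq\cost_{\route^{\ast}}(\loadprof_{\regime_{-}}(\var\odrateprof))-\multod^{(\odpair)}_{\regime_{-}}(\var)$; it vanishes at $\bar\var$ and is strictly positive for $\var<\bar\var$ because $\route^{\ast}$ is not a minimum-cost path on the left, whence $\Delta'(\bar\var)<0$, that is $\sum_{\edge\in\route^{\ast}}\cost_{\edge}'\,\load_{\edge}'<(\multod^{(\odpair)}_{\regime_{-}})'$ along the $\regime_{-}$-trajectory. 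This says precisely that the $\regime_{-}$-minimiser violates the stationarity condition of the $\regime_{+}$-problem along $\route^{\ast}$, so it cannot be optimal over the larger regime; invoking the uniqueness of the multipliers in \cref{lem:differentiability-of-fixed-regime-possibly-negative-flows}~\ref{it:lem:differentiability-of-fixed-regime-possibly-negative-flows-a}, together with strict convexity of the quadratic form in the load coordinates carrying a positive cost slope, yields the strict inequality, which then propagates to $\eq{\SC}$ and $\PoA$ as above.

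The hard part is this last paragraph: converting the geometric fact that a new path enters the support into a strict drop of the quadratic form. The delicacy is that with merely nondecreasing affine costs the form $\sum_{\edge}\cost_{\edge}'(\argdot)^{2}$ is only positive semidefinite, so the load velocities need not be unique on edges with constant cost; the argument must be phrased through the uniquely determined loads on the positive-slope edges and the uniqueness of the Lagrange multipliers, which is exactly why \cref{lem:differentiability-of-fixed-regime-possibly-negative-flows}~\ref{it:lem:differentiability-of-fixed-regime-possibly-negative-flows-a} is invoked. Establishing that $\Delta$ is \emph{strictly} positive just to the left of $\bar\var$ (not merely nonnegative), which relies on $\route^{\ast}$ genuinely leaving the active regime at the breakpoint, is the other point requiring care.
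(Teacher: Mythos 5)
Your proposal is correct and follows essentially the same route as the paper's proof: reduce the one-sided derivatives to derivatives along smooth fixed-regime trajectories, establish the identity $\thfunc'=\sum_{\edge}\cost_{\edge}'(\load_{\edge})(\load_{\edge}')^{2}$, characterize the two one-sided values as optimal values of nested convex quadratic programs with the common objective $Q_{\bar\loadprof}$, and, in the affine case, obtain strictness from the strictly negative Lagrange multiplier attached to the constraint $\yvar_{\route^{\ast}}=0$ together with uniqueness of the dual multipliers. The only step you gloss over is that a $\mathcal{C}^1$ load trajectory must first be turned into a $\mathcal{C}^1$ path-flow selection before $\flowprof'$ is meaningful (path-flow decompositions are non-unique); the paper handles this with the Moore--Penrose pseudoinverse in \cref{lem:pseudoinverse}, and in the affine case it cites earlier work for piecewise-affine equilibrium flow selections that agree at $\bar\var$.
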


\begin{remark}
\label{rm:WatlingWithoutCycles}
\cref{th:WatlingSC}\,\ref{it:th:WatlingSC-1}  can be generalized to networks with just strictly increasing cost functions, assuming that the set of edges $\edge$ where $\cost'_{\edge}(\load_{\edge}(\rateprof(\bar\var))=0$ form a graph without undirected cycles. 
\end{remark}

\begin{remark}
\label{re:BPR}
The hypothesis of strictly positive derivatives in \cref{th:WatlingSC}~\ref{it:th:WatlingSC-1} excludes from the result the case of \ac{BPR} cost functions. The result would hold also in the \ac{BPR} case if we would have a generalization of \cref{th:differentiability-on-curve}~\ref{th:differentiability-on-curve-a}  to strictly increasing smooth cost functions with zero derivative only possibly at zero. 
\end{remark}

\begin{remark}\label{rm:assumption-disjoint}
Violating the model assumption of having the sets $\routes^{(\odpair)}$ pairwise disjoint for every $\odpair\in\ODpairs$, does not affect the results of this paper. 
Indeed, one can always add a dummy origin $\tilde{\source}^{(\odpair)}$ and attach a zero cost edge $(\tilde{\source}^{(\odpair)},\source^{(\odpair)})$ to each of the original paths $\route\in\routes^{(\odpair)}$. 
This operation does not change the equilibrium;  hence the equilibrium loads on the original edges remain unchanged. 
Furthermore, this transformation does not create any new undirected cycle, thus, as noted in \cref{rm:WatlingWithoutCycles} the results are still valid.
\end{remark}

\begin{example}
\label{ex:contraction-and-expansion}
At a $\activ{\routes}$-breakpoint $\bar\var$ where neither $\activ{\routes}(\bar\var^{-})\subset\activ{\routes}(\bar\var^{+})$ nor $\activ{\routes}(\bar\var^{-})\supset\activ{\routes}(\bar\var^{+})$ are true, all inequalities are possible between the left and the right derivatives of the social cost and the \acl{PoA}. 
In order to illustrate this phenomenon, consider the  single \ac{OD} network in  \cref{fig:contr-and-exp} where one cost function depends on a parameter $\epsilon>0$. 
For every $\epsilon$, there is a breakpoint at $\bar\rate=2$ with neither $\activ{\routes}^{-}(\bar\rate)\subset\activ{\routes}^{+}(\bar\rate)$ nor $\activ{\routes}^{-}(\bar\rate)\supset\activ{\routes}^{+}(\bar\rate)$, and the left and right derivatives at $\bar\rate$ of $\eq{\SC}$ and $\PoA$ are ranked in different order depending on the value of $\epsilon$. 
There are four paths
\begin{align*}
\route_1=&\source\rightarrow\vertex_1\rightarrow\sink\\
\route_2=&\source\rightarrow\vertex_2\rightarrow\sink\\
\route_3=&\source\rightarrow\vertex_1\rightarrow\vertex_2\rightarrow\sink\\
\route_4=&\source\rightarrow\sink
\end{align*}
and the equilibrium flow vector $\flowprof$ is given in the following table:
\begin{center}
\def\arraystretch{2}
\begin{tabular}{c|c|c|c|c|c}
Interval & $\eqcost(\rate)$ & $\flow_{1}$ & $\flow_{2}$ & $\flow_{3}$ &$\flow_4$ 
\\
\hline
$\rate\in[0,\frac1{1+\varepsilon})$ & $(2+\varepsilon)\rate$ & 0 & 0 & $\rate$ &0
\\
\hline
$\rate\in[\frac1{1+\varepsilon},2)$ & $\frac{\varepsilon\rate+2+2\varepsilon}{1+2\varepsilon}$ & $\frac{(1+\varepsilon)\rate-1}{1+2\varepsilon}$ & $\frac{(1+\varepsilon)\rate-1}{1+2\varepsilon}$ & $\frac{2-\rate}{1+2\varepsilon}$&0    \\
\hline
$\rate\in[2,+\infty)$ & $\frac{\rate+4}3$ & $\frac{\rate+1}3$ & $\frac{\rate+1}3$ & 0 & $\frac{\rate-2}3$
\end{tabular}
\end{center}
where we can observe that 
\begin{equation*}
\lim_{\rate\to2^{-}}\eqcost'(\rate)=\frac{\varepsilon}{1+2\varepsilon},\quad
\lim_{\rate\to2^{+}}\eqcost'(\rate)=\frac13.  
\end{equation*}
Therefore,
\begin{align*}
\lim_{\rate\to2^{-}}\eqcost'(\rate)
&>
\lim_{\rate\to2^{+}}\eqcost'(\rate)
\quad\text{if }\varepsilon<1,\\
\lim_{\rate\to2^{-}}\eqcost'(\rate)
&=
\lim_{\rate\to2^{+}}\eqcost'(\rate)
\quad\text{if }\varepsilon=1,\\
\lim_{\rate\to2^{-}}\eqcost'(\rate)
&<
\lim_{\rate\to2^{+}}\eqcost'(\rate)
\quad\text{if }\varepsilon>1,
\end{align*}
which implies the same relations at $\bar\rate=2$ for the left and right derivatives  of $\eq{\SC}$ and $\PoA$.
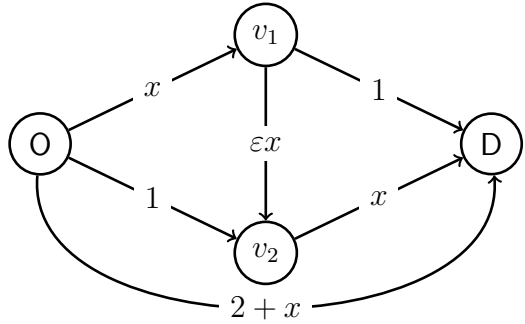
\begin{figure}

\begin{center}
\begin{tikzpicture}
   \node[shape=circle,draw=black,line width=1pt,minimum size=0.5cm] (v1) at (-3,0)  { $\source$}; 
   \node[shape=circle,draw=black,line width=1pt,minimum size=0.5cm] (v2) at (0,1.45)  {$\vertex_{1}$}; 

   \node[shape=circle,draw=black,line width=1pt,minimum size=0.5cm] (v5) at (0,-1.45)  {$\vertex_{2}$}; 
   \node[shape=circle,draw=black,line width=1pt,minimum size=0.5cm] (v6) at (3,0)  {$\sink$}; 
    
   \draw[line width=1pt,->] (v1) to   node[midway,fill=white] {$\load$} (v2);
   \draw[line width=1pt,->] (v1) to   node[midway,fill=white] {$1$} (v5);
   \draw[line width=1pt,->] (v2) to   node[midway,fill=white] {$\varepsilon\load$} (v5);
   
   \draw[line width=1pt,->] (v2) to   node[midway,fill=white] {$1$} (v6);

   \draw[line width=1pt,->] (v5) to   node[midway,fill=white] {$\load$} (v6);
   \draw[line width=1pt,->] (v1) to [bend right=95]  node[midway,fill=white] {$2+\load$} (v6);    
        
\end{tikzpicture}
\end{center}
\caption{In this network, varying the parameter $\epsilon$ we can observe any type of inequality between the left and right derivative of the equilibrium cost at a breakpoint}
\label{fig:contr-and-exp}
\end{figure}
\end{example}

As shown in \cref{th:WatlingSC}~\ref{it:th:WatlingSC-2}, with affine cost functions the inequalities of \cref{conj:Watling} can never hold with equality at a breakpoint. 
It is an open question whether this is also the case for convex costs or \ac{BPR} costs. 
On the other hand, the following example shows that the inequalities of \cref{conj:Watling} can indeed hold with equality if we allow for non-convex costs. 

\begin{example}\label{ex:Watling-equality}
Consider a single-commodity parallel network with two paths and cost functions:
\begin{equation}
\label{eq:c1-c2}
\cost_1(\load)=\begin{cases}
-(\load-1)^2+1&\text{if }\load\le 1,\\
(\load-1)^2+1&\text{if }\load> 1,
\end{cases}
\quad
\cost_2(\load)=1+x^2.    
\end{equation}
For $\rate\le 1$ only the first link is used, and the equilibrium cost $\eqcost(\rate)$ is equal to $-(\rate-1)^2+1$, whose derivative at $1$ is zero. 
For $\rate\in[1,+\infty)$ the equilibrium flow routes $(\rate+1)/2$ through the first link and $(\rate-1)/2$ through the second link. 
Hence, the equilibrium cost is
\begin{equation*}
\eqcost(\rate)=1+\parens*{\frac{\rate-1}2}^2    
\end{equation*} 
for every $\rate\in[1,+\infty)$, whose derivative at $1$ also equals  zero.
\end{example}

\section{Proof of \cref{th:WatlingSC}}
\label{se:proof-main-theorem}
The proof of \cref{th:WatlingSC} will be split into a number of preliminary intermediate steps.
The general principle is described in \cref{le:GeneralWatlingLemma} which assumes \emph{a priori} the existence of a selection of equilibrium flows that is smooth both to the left and to the right of a breakpoint. 
This assumption is then shown to hold both when the  cost functions have strictly positive derivatives (\cref{th:WatlingSC}~\ref{it:th:WatlingSC-1}), and also when the costs are affine (\cref{th:WatlingSC}~\ref{it:th:WatlingSC-2}).

\begin{lemma}
\label{le:GeneralWatlingLemma}
Let $(\graph,\costprof,\ODpairs,\rateprof(\argdot))$ be a nonatomic routing game with $\mathcal C^{1}$ nondecreasing costs and a linear demand function
$\rateprof(\var)=\var\,\odrateprof$
 with $\odrateprof\in\Rpos^{\ODpairs}$. 
Let $\bar\var\in\Rpos$ be a $\activ{\routes}$-breakpoint with active regimes 
$\regime^{-} \coloneqq \activ{\routes}(\bar\var^{-})$ over $\inter^{-} \coloneqq (\bar\var-\varepsilon,\bar\var)$   and $\regime^{+} \coloneqq \activ{\routes}(\bar\var^{+})$  over $\inter^{+} \coloneqq (\bar\var, \bar\var+\varepsilon)$.
Suppose that there exist  maps
$\flowprof^{-}, \flowprof^{+}: (\bar\var-\varepsilon,\bar\var+\varepsilon) \to \Rpos^{\routes}$ 
 of class $\mathcal{C}^1$ such that
\begin{itemize}
\item 
$\flowprof^{-}(\var)$ is an equilibrium flow of demand $\rateprof(\var)$ for all $\var\in \inter^{-}$,

\item 
 $\flowprof^{+}(\var)$ is an equilibrium flow of demand $\rateprof(\var)$ for all $\var\in \inter^{+}$,
 \item  $\flowprof^{-}(\bar\var)=\flowprof^{+}(\bar\var)$.
\end{itemize}
Then \cref{conj:Watling} holds with weak inequalities.
\end{lemma}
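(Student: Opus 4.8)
The plan is to use the linearity of the demand to rewrite the statement as a comparison of one-sided \emph{curvatures} of the Beckmann value, and then to read these curvatures as minimal quadratic flow-energies, for which the desired ordering is an immediate consequence of an inclusion of feasible sets. Since $\rateprof(\var)=\var\,\odrateprof$, the identity $\eq{\SC}(\rateprof)=\sum_{\odpair\in\ODpairs}\rate^{(\odpair)}\eqcost^{(\odpair)}(\rateprof)$ reads
\begin{equation*}
(\eq{\SC}\circ\rateprof)(\var)=\var\,\psi(\var),
\qquad
\psi(\var)\coloneqq\inner{\eqcostprof(\rateprof(\var))}{\odrateprof}.
\end{equation*}
With $\valueV(\rateprof)\coloneqq\min_{\loadprof\in\loads_{\rateprof}}\potential(\loadprof)$ the Beckmann value of \eqref{eq:min-load}, the equilibrium costs are the multipliers of the demand constraints, so $\grad\valueV(\rateprof)=\eqcostprof(\rateprof)$ and $\psi=(\valueV\circ\rateprof)'$. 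The map $\psi$ is continuous by \cref{lem:continuity}, and the $\mathcal{C}^{1}$ selections $\flowprof^{\pm}$ make the equilibrium loads $\var\mapsto\loadprof^{\pm}(\var)$ of class $\mathcal{C}^{1}$ near $\bar\var$; since $\psi$ is a $\mathcal{C}^{1}$ expression in these loads, it is continuously differentiable on each of $\inter^{-}$ and $\inter^{+}$, so the one-sided derivatives $\psi'(\bar\var^{-})$, $\psi'(\bar\var^{+})$ exist and
\begin{equation*}
(\eq{\SC}\circ\rateprof)'(\bar\var^{\pm})=\psi(\bar\var)+\bar\var\,\psi'(\bar\var^{\pm}).
\end{equation*}
Since $\bar\var>0$, the whole claim for $\eq{\SC}$ reduces to $\psi'(\bar\var^{-})\ge\psi'(\bar\var^{+})$ when $\regime^{-}\subset\regime^{+}$, and to the reverse inequality when $\regime^{-}\supset\regime^{+}$.

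I would then identify $\psi'(\bar\var^{\pm})$ with a minimal energy. Writing $\valueV^{\regime}(\rateprof)\coloneqq\potential(\loadprof_{\regime}(\rateprof))$ for the value of $(\prim_{\rateprof}^{\regime})$, the active regime equals $\regime^{\pm}$ on $\inter^{\pm}$, so $\valueV\circ\rateprof=\valueV^{\regime^{\pm}}\circ\rateprof$ there and $\psi'(\bar\var^{\pm})$ equals the curvature of $\valueV^{\regime^{\pm}}$ in the direction $\odrateprof$ at $\rateprof(\bar\var)$. Differentiating the equilibrium conditions of $(\prim_{\rateprof}^{\regime^{\pm}})$ along the ray and dualizing expresses this curvature as the minimal energy
\begin{equation*}
\psi'(\bar\var^{\pm})=\min\setdef*{\sum_{\edge\in\edges}\cost_{\edge}'(\load^{*}_{\edge})\,\bigl[(\edgepaths\zvec)_{\edge}\bigr]^{2}}{\sumflow\zvec=\odrateprof,\ \zvar_{\route}=0\text{ for }\route\notin\regime^{\pm}},
\end{equation*}
where $\loadprof^{*}\coloneqq\loadprof^{-}(\bar\var)=\loadprof^{+}(\bar\var)$ is the common equilibrium load at the breakpoint and $\zvec$ ranges over flow-directions meeting the demand-increment $\odrateprof$ with support in $\regime^{\pm}$. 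The hypothesis $\flowprof^{-}(\bar\var)=\flowprof^{+}(\bar\var)$ is exactly what guarantees that $\loadprof^{*}$, hence the weights $\cost'_{\edge}(\load^{*}_{\edge})$, are common to the two problems.

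With this representation the comparison is immediate: because $\regime^{-}\subseteq\regime^{+}$ enlarges the admissible support, every $\zvec$ feasible for the $\regime^{-}$ problem is feasible for the $\regime^{+}$ problem, so minimizing the same nonnegative energy over the larger feasible set can only decrease the value, whence $\psi'(\bar\var^{+})\le\psi'(\bar\var^{-})$. This is \cref{conj:Watling}\,\ref{eq:Watling-Inequality-1} for $\eq{\SC}$; the contraction case follows by exchanging $\regime^{-}$ and $\regime^{+}$. For the price of anarchy I would write $\PoA\circ\rateprof=(\eq{\SC}\circ\rateprof)/(\opt{\SC}\circ\rateprof)$: whenever $\opt{\SC}\circ\rateprof$ is differentiable at $\bar\var$ (for instance when $\load_{\edge}\cost_{\edge}(\load_{\edge})$ is convex, by \cref{lem:optimum-social-cost-smooth}), the quotient rule gives
\begin{equation*}
(\PoA\circ\rateprof)'(\bar\var^{-})-(\PoA\circ\rateprof)'(\bar\var^{+})
=\frac{(\eq{\SC}\circ\rateprof)'(\bar\var^{-})-(\eq{\SC}\circ\rateprof)'(\bar\var^{+})}{\opt{\SC}(\rateprof(\bar\var))},
\end{equation*}
which has the same sign and transfers the inequality to $\PoA$.

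The main obstacle is the energy representation of the curvature. Turning the sensitivity of the equilibrium costs into the displayed quadratic minimum requires differentiating the first-order conditions of $(\prim_{\rateprof}^{\regime^{\pm}})$ and dualizing, which is where both the $\mathcal{C}^{1}$ regularity supplied by the selections $\flowprof^{\pm}$ and the uniqueness of the multipliers at $\bar\var$ (recorded in \cref{lem:differentiability-of-fixed-regime-possibly-negative-flows}) are used; the equality $\flowprof^{-}(\bar\var)=\flowprof^{+}(\bar\var)$ is essential to ensure that the two energy problems carry the same weights, so that the two curvatures are genuinely compared at a single load profile. Once this representation is in place the monotonicity is purely set-theoretic. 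Establishing that the required smooth selections actually exist---under strictly positive cost derivatives and under affine costs---is the content subsequently supplied in the two parts of \cref{th:WatlingSC}.
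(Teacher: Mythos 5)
Your plan is correct and follows essentially the same route as the paper: the reduction (via linearity of the demand) to comparing $\thfunc(\bar\var^{\pm})=\psi'(\bar\var^{\pm})$, the representation of these one-sided derivatives as minima of the quadratic energy \eqref{eq:quad_opt} over flow directions meeting $\odrateprof$ with support in $\regime^{\pm}$, the set-inclusion comparison of the two programs, and the quotient rule combined with \cref{lem:continuity,lem:optimum-social-cost-smooth} for the $\PoA$. The one step you flag as the "main obstacle"---the energy representation of the curvature---is exactly what the paper carries out, by checking that $\flowprof'(\bar\var)$ satisfies the \ac{KKT} conditions \eqref{eq:condition_fprime} of the convex program with multipliers $\multod^{(\odpair)}=(\eqcost^{(\odpair)})'(\bar\var^{\pm})$ obtained by differentiating the equilibrium identities \eqref{eq:cf_cont}, so your outline is on target.
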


\begin{proof}
Let $\loadprof^{-}(\var)$ and $\loadprof^{+}(\var)$ be the loads induced by $\flowprof^{-}(\var)$ and $\flowprof^{+}(\var)$, and
let $\bar\loadprof$ and $\bar\flowprof$ be their corresponding common values at $\bar\var$.
For notational simplicity, for each $\var\in(\bar\var-\varepsilon,\bar\var+\varepsilon)$ 
we let 
$\eqcost^{(\odpair)}(\var)=\eqcost^{(\odpair)}(\rateprof(\var))$ denote the equilibrium costs and $\eq{\SC}(\var)=\eq{\SC}(\rateprof(\var))$ the social cost at equilibrium with demand $\rateprof(\var)$,
so that
\begin{equation}
\label{eq:SC-mu}
\eq{\SC}(\var)=\sum_{\odpair\in\ODpairs}\rate^{(\odpair)}(\var)\cdot\eqcost^{(\odpair)}(\var).    
\end{equation}
By \cref{lem:continuity}, the maps $\var\mapsto\eqcost^{(\odpair)}(\var)$ are continuous. 
They are also $\mathcal{C}^{1}$ over both $\inter^{-}$ and $\inter^{+}$, with 
well defined unilateral limits $(\eqcost^{(\odpair)})'(\bar\var^{-})$ and $(\eqcost^{(\odpair)})'(\bar\var^{+})$ at $\bar\var$.
Indeed, for each $\odpair\in\ODpairs$ we may fix an active path $\route\in\regime^{+}$ so that, for all $\var\in \inter^{+}$, we have 
$\eqcost^{(\odpair)}(\var)=\cost_{\route}(\loadprof^{+}(\var))$, 
which is smooth with a well defined limit for its derivative at $\bar\var^{+}$. A similar argument using $\loadprof^{-}(\argdot)$ applies for $\var\in \inter^{-}$. 

Since $\rate^{(\odpair)}(\var)=\var\,\odrate^{(\odpair)}$, we can express the left and the right derivatives at $\bar\var$ of the social cost as
\begin{align*}
(\eq{\SC})'(\var^{-})=&
\sum_{\odpair\in\ODpairs}\odrate^{(\odpair)}\cdot\eqcost^{(\odpair)}(\bar\var)+
\bar\var\,\odrate^{(\odpair)}\cdot(\eqcost^{(\odpair)})'(\bar\var^{-}),
\\
(\eq{\SC})'(\var^{+})=&
\sum_{\odpair\in\ODpairs}\odrate^{(\odpair)}\cdot\eqcost^{(\odpair)}(\bar\var)+
\bar\var\,\odrate^{(\odpair)}\cdot(\eqcost^{(\odpair)})'(\bar\var^{+}).
\end{align*}
It follows that $(\eq{\SC})'(\var^{-})\ge (\eq{\SC})'(\var^{+})$ if and only if
$\thfunc(\bar\var^{-})
\geq\thfunc(\bar\var^{+})$, where for $\var\neq\bar\var$ we define
\begin{equation}
\label{eq:des}
\thfunc(\var) \coloneqq \sum_{\odpair\in\ODpairs}
\odrate^{(\odpair)} \cdot(\eqcost^{(\odpair)})'(\var).
\end{equation}

We now  derive an alternative characterization for $\thfunc(\bar\var^{-})$ and $\thfunc(\bar\var^{+})$.
Let us consider the latter.
For  $\var\in \inter^{+}$ we have that
$\flowprof(\var)\coloneqq\flowprof^{+}(\var)$ is an equilibrium flow with associated induced equilibrium load $\loadprof(\var)\coloneqq\loadprof^{+}(\var)$. In particular, 
\begin{equation}
\label{eq:cond-equil-I}
\sum_{\route\in\routes^{(\odpair)}}\flowprof_{\route}(\var)=\rate^{(\odpair)}(\var)=\var\,\odrate^{(\odpair)},    
\end{equation}
so that differentiating we get
$\sum_{\route\in\routes^{(\odpair)}}\flowprof_{\route}'(\var)=\odrate^{(\odpair)}.  
$
Therefore for $\var>\bar\var$ we have
\begin{equation}
\label{eq:theta}
\thfunc(\var)=\sum_{\odpair\in\ODpairs}
\sum_{\route\in\routes^{(\odpair)}}\flowprof_{\route}'(\var)\cdot(\eqcost^{(\odpair)})'(\var).
\end{equation}
Now, for any inactive path $\route\not\in\regime^{+}$ we have  $\flow_{\route}'(\var)=0$,
so the inner sum 
in \eqref{eq:theta} can be restricted to the active paths $\route\in\regime^{+}$. For these  paths $\route\in\routes^{(\odpair)}\cap\regime^+$ we have 
$\eqcost^{(\odpair)}(\var)=\cost_{\route}(\loadprof(\var))$ and  therefore
\begin{equation}
\label{eq:lambda-mu}
(\eqcost^{(\odpair)})'(\var)
=\sum_{\edge\in\route}\cost'_{\edge}(\load_\edge(\var))\cdot\load_{\edge}'(\var)
=\sum_{\edge\in\route}\cost'_{\edge}(\load_\edge(\var))\cdot\sum_{\routealt\in\routes}\delta_{\edge\routealt}\flow_{\routealt}'(\var).
\end{equation}
Plugging this into \eqref{eq:theta} and using Fubini's rule to exchange the order of sums, we
can write $\thfunc(\var)$ in the
equivalent form
\begin{equation}
\label{eq:theta_expression_general}
\begin{split}
\thfunc(\var)
&=\sum_{\edge\in\edges}\cost_{\edge}'(\load_{\edge}(\var))
\parens*{\sum_{\route\in\routes}\edgepath_{\edge\route}\flow_{\route}'(\var)}^2.
\end{split}
\end{equation}

Let us recall that $\bar\loadprof$ denotes the equilibrium load induced by $\bar\flowprof=\flowprof(\bar\var)$, and consider the positive semidefinite quadratic form

\begin{equation}
\label{eq:quad_opt}
Q_{\bar\loadprof}(\yvec)\coloneqq\frac{1}{2}\sum_{\edge\in\edges}\cost_{\edge}'(\bar\load_{\edge})
\parens*{\sum_{\route\in\routes}\edgepath_{\edge\route}\yvar_{\route}}^2\geq 0.
\end{equation}
We claim that $\thfunc(\bar\var^{+})$ coincides with the optimal value of the convex quadratic program
\begin{equation}
\label{eq:program_fprime_+}
\thfunc(\bar\var^{+})=\min_{\yvec}
~~Q_{\bar\loadprof}(\yvec)
\quad\text{s.t.}\
\sum_{\route\in\routes^{(\odpair)}}\yvar_{\route}=\odrate^{(\odpair)}
\ \text{for all }\odpair\in\ODpairs,
\ \text{and }\yvar_{\route}=0
\ \text{for }\route\notin\regime^{+},
\end{equation}
with optimal solution $\bar{\yvec}=\flowprof'(\bar\var)$.
Indeed,  we already observed that $\flowprof'(\var)$ satisfies the constraints in \eqref{eq:program_fprime_+} for all $\var>\bar\var$, so that letting $\var\downarrow\bar\var$ the same holds for $\bar\yvar=\flowprof'(\bar\var)$. 
On the other hand, denoting $\zvar_\edge=\sum_{\route\in\routes}\delta_{\edge\route}\yvar_{\route}$, we have ${\partial Q_{\bar\loadprof}}/{\partial\yvar_{\route}}=\sum_{\edge\in\route}\cost_{\edge}'(\bar\loadprof)\,\zvar_\edge$, so that the first-order optimality conditions for \eqref{eq:program_fprime_+} are  
\begin{equation}
\label{eq:condition_fprime}
\forall \odpair\in\ODpairs, \forall\route\in\routes^{(\odpair)},
\quad\sum_{\edge\in\route}\cost_{\edge}'(\bar\loadprof)\,\zvar_\edge
=
\begin{cases}
\multod^{(\odpair)} &\text{if }  \route\in\regime^{+},\\
\multod^{(\odpair)} +\multpath_{\route}&\text{if }  \route\not\in\regime^{+},
\end{cases}
\end{equation}
where the $\multod^{(\odpair)}$ are Lagrange multipliers for the constraints $\sum_{\route\in\routes^{(\odpair)}}\yvar_{\route}=\odrate^{(\odpair)}$ for all $\odpair\in\ODpairs$,  and $\multpath_{\route}$ are multipliers for the constraints $\yvar_{\route}=0\text{ for }\route\notin\regime^{+}$.
Now, since  for all $\var\in \inter^{+}$ the pair $(\flowprof(\var),\loadprof(\var))$ is an equilibrium flow with $\activ{\routes}(\rateprof(\var))=\regime^{+}$, we have
\begin{equation}
\label{eq:cf_cont}
\forall \odpair\in\ODpairs,\forall\route\in\routes^{(\odpair)}
\quad\sum_{\edge\in\route}\cost_{\edge}(\load_\edge(\var))
=
\begin{cases}
\eqcost^{(\odpair)}(\var) &\text{if }  \route\in\regime^{+},\\
\eqcost^{(\odpair)}(\var) +(\cost_{\route}(\loadprof(\var)) -\eqcost^{(\odpair)}(\var))&\text{if }  \route\not\in\regime^{+}.
\end{cases}
\end{equation}
Differentiating in $\var$ and letting $\var\downarrow\bar\var$, we deduce that
\eqref{eq:condition_fprime} holds with 
$\multod^{(\odpair)}=(\eqcost^{(\odpair)})'(\bar\var^{+})$ for all $\odpair\in\ODpairs$
and $\multpath_{\route}=(\cost_{\route}\circ\loadprof)'(\bar\var)-(\eqcost^{(\odpair)})'(\bar\var^{+})$ for $\route\in\routes^{(\odpair)}\setminus\regime^{+}$.
Since the quadratic program is convex, these optimality conditions imply optimality. Hence $\bar\yvec=\flowprof'(\bar\var)$ is an optimal solution and then the equality \eqref{eq:program_fprime_+} follows 
 by letting $\var\to\bar\var^{+}$  in \eqref{eq:theta_expression_general}.

Repeating the argument for $\var\in \inter^{-}$, this time with $\flowprof(\var)=\flowprof^{-}(\var)$ and $\loadprof(\var)=\loadprof^{-}(\var)$, and noting that $\loadprof^-(\bar\var)=\bar\loadprof$,
we obtain that $\thfunc(\bar\var^{-})$ is similarly characterized as 
\begin{equation}
\label{eq:program_fprime_-}
\thfunc(\bar\var^{-})=\min_{\yvec} ~~Q_{\bar\loadprof}(\yvec)
\quad\text{s.t.}\
\sum_{\route\in\routes^{(\odpair)}}\yvar_{\route}=\odrate^{(\odpair)}
\ \text{for all }\odpair\in\ODpairs,
\ \text{and }\yvar_{\route}=0
\ \text{for }\route\notin\regime^{-}.
\end{equation}

The objective function in both \eqref{eq:program_fprime_+} and \eqref{eq:program_fprime_-} is the same, and the only difference between 
these quadratic programs are the constraints $\yvar_{\route}=0$ for all $\route\not\in\regime^{-}$ or $\route\not\in\regime^{+}$: 
a larger regime implies fewer constraints  and thus a smaller optimal value. 
Explicitly, if $\regime^{-}\subset \regime^{+}$, then $\thfunc(\bar\var^{-})\geq \thfunc(\bar\var^{+})$; and therefore
$(\eq{\SC})'(\bar\var^{-})
\geq
(\eq{\SC})'(\bar\var^{+})$.
Similarly, if $\regime^{-}\supset \regime^{+}$, then $\thfunc(\bar\var^{-})\leq \thfunc(\bar\var^{+})$ and 
$(\eq{\SC})'(\bar\var^{-})
\leq
(\eq{\SC})'(\bar\var^{+})$.
This establishes half of \cref{conj:Watling} with weak inequalities.

Concerning the derivative of $\PoA\circ\rateprof$, when it exists, it is
\begin{equation}
\label{eq:deriv-PoA-mu}
(\PoA\circ\rateprof)'(\var)=\frac{(\eq{\SC}\circ\rateprof)'(\var)\cdot \opt{\SC}(\rateprof(\var))-\eq{\SC}(\rateprof(\var))\cdot(\opt{\SC}\circ\rateprof)'(\var)}{(\opt{\SC}(\rateprof(\var)))^2}.  
\end{equation}
Using the continuity of $\eq{\SC}$ at $\rateprof(\bar\var)$ 
and the fact that $\opt{\SC}$ is $\mathcal C^{1}$ at $\rateprof(\bar\var)$ (\cref{lem:continuity,lem:optimum-social-cost-smooth}), we  see that the statements regarding the \acl{PoA} in \cref{conj:Watling} also hold with weak inequalities. 
\end{proof}

In order to use \cref{le:GeneralWatlingLemma} one needs to ensure the existence of smooth equilibrium flows to the left and to the right of a breakpoint. Having a smooth equilibrium load profile (which can be derived from the results in \cref{se:regularity}) allows us to choose a smooth equilibrium flow, as shown in the following lemma.
Note that even when the equilibrium loads are unique, there may exist multiple path flows that induce the same loads. 
In order to single out a smooth selection of path flows we  use the Moore-Penrose pseudoinverse. 
We recall that the pseudoinverse
generalizes the classical inverse of nonsingular square matrices, and provide a tool for solving an ill-posed linear system $A\,\boldsymbol{x}=\boldsymbol{b}$, where $A$ is an $m\times n$ rectangular real matrix and the right-hand side $\boldsymbol{b}\in\reals^{m}$ may not even belong to the range of $A$ so that the system could have no solution. 
The pseudo-inverse is a linear map $\boldsymbol{b}\mapsto A^{\dag}\boldsymbol{b}$ that associates to each $\boldsymbol{b}\in\reals^m$ the least norm solution of the equation $A\,\boldsymbol{x}=\bar{\boldsymbol{b}}$ with $\bar{\boldsymbol{b}}$ the projection of $b$ onto  $\Range(A)$. 
Thus, if $\boldsymbol{b}\in \Range(A)$ it gives the least norm solution of $A\,\boldsymbol{x}=\boldsymbol{b}$ and when $A$ is an invertible square matrix it reduces to the classical inverse $A^{\dag}\boldsymbol{b}=A^{-1}\boldsymbol{b}$.
In general, $A^{\dag}\boldsymbol{b}$ is defined as the limit when $\varepsilon\to 0^+$ of the unique minimizer of the strongly convex function $\boldsymbol{x}\mapsto \|A\boldsymbol{x}-\boldsymbol{b}\|^2+\varepsilon\|\boldsymbol{x}\|^{2}$, that is, $A^\dag \boldsymbol{b}=\lim_{\varepsilon\to 0^+}(A^{\top}A+\varepsilon I)^{-1}A^{\top}\boldsymbol{b}$. 
This defines a linear map whose representative matrix is denoted by $A^{\dag}$. 
In our setting we  use the pseudoinverse to associate with each vector of loads $\loadprof$ the vector of path flows $\flowprof$
that induce the given $\loadprof$ and has minimum norm $\|\flowprof\|$. 

\begin{lemma}
\label{lem:pseudoinverse}
Let $(\graph,\costprof,\ODpairs,\rateprof(\argdot))$ be a nonatomic routing game with a $\mathcal{C}^1$ demand map $\rateprof(\argdot)$. Let $\var\mapsto\loadprof(\rateprof(\var))$ be a curve  of equilibrium loads which is defined and $\mathcal{C}^1$ on some interval $I$. 
Then, $\loadprof(\rateprof(\var))$ can be decomposed into a path flow $\flowprof(\rateprof(\var))$ that is also $\mathcal{C}^1$ for $\var\in I$.
\end{lemma}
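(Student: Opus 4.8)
The plan is to realize the decomposition as the image of the load/demand data under one fixed linear map, so that smoothness is inherited directly from the smoothness of that data. Concretely, I would stack the two defining relations $\sumflow\flowprof=\rateprof$ and $\loadprof=\edgepaths\flowprof$ into a single linear system $A\flowprof=b(\var)$, where $A:=\bigl[\begin{smallmatrix}\sumflow\\ \edgepaths\end{smallmatrix}\bigr]$ is a constant matrix and $b(\var):=\bigl(\rateprof(\var),\loadprof(\rateprof(\var))\bigr)$. By hypothesis $\rateprof(\argdot)$ is $\mathcal C^1$ and the load curve $\var\mapsto\loadprof(\rateprof(\var))$ is $\mathcal C^1$ on $I$, so $b(\argdot)$ is $\mathcal C^1$ on $I$. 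I would then set $\flowprof(\var):=A^{\dag}b(\var)$, the least-norm solution furnished by the Moore--Penrose pseudoinverse described before the statement. Including the demand block $\sumflow$ (and not only the load block $\edgepaths$) is deliberate: in a multi-commodity network with shared edges the demand need not be recoverable from the loads alone, so I keep both constraints in the system.

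There are two things to verify: that this $\flowprof(\var)$ genuinely solves the system, and that it is $\mathcal C^1$. For the first, since $\loadprof(\rateprof(\var))$ is an \emph{equilibrium} load for the demand $\rateprof(\var)$, there exists a feasible flow realizing it, that is $b(\var)\in\Range(A)$ for every $\var\in I$; hence $A^{\dag}b(\var)$ is an exact solution, $A\,A^{\dag}b(\var)=b(\var)$, so $\flowprof(\var)$ satisfies both the demand equations and the load equations and is a bona fide decomposition of $\loadprof(\rateprof(\var))$. For the second, $A^{\dag}$ is a fixed matrix, so $\var\mapsto A^{\dag}b(\var)$ is a linear image of the $\mathcal C^1$ curve $b(\argdot)$ and is therefore $\mathcal C^1$ on $I$. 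A convenient by-product, useful for gluing two such curves at a breakpoint, is that the selection depends only on $b(\var)$, so two curves built from load data that coincide at a point automatically produce flows that coincide there.

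The delicate point, and the one I expect to require the most care, is to make the selected flow a genuine \emph{equilibrium} flow, namely supported on the active regime and nonnegative, rather than merely a signed decomposition. Support is easy to force: restricting the columns of $A$ to the paths of the locally constant active regime $\regime=\activ{\routes}(\rateprof(\var))$ keeps the system solvable, because equilibrium flows are supported on active paths, and the resulting least-norm solution then vanishes off $\regime$. Nonnegativity is subtler, since the unconstrained least-norm solution of a solvable system need not be nonnegative even when a nonnegative solution exists. I would resolve this by observing that when the restricted incidence matrix has trivial kernel the decomposition is unique, so $A^{\dag}b(\var)$ must coincide with the nonnegative equilibrium flow; and that in any case the only features of $\flowprof(\argdot)$ used downstream in \cref{le:GeneralWatlingLemma} are its smoothness, its support in $\regime$, and the demand/load identities recovered by differentiation, all of which this pseudoinverse selection supplies.
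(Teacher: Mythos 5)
Your construction is essentially the paper's own proof: it too stacks $\edgepaths$ and $\sumflow$ into a single constant matrix and applies its Moore--Penrose pseudoinverse to the $\mathcal{C}^1$ load/demand data, the only cosmetic difference being that the paper anchors at an arbitrary decomposition $\flowprof^0$ at a base point $\var^0$ and applies the pseudoinverse to the increments $b(\var)-b(\var^0)$ rather than to $b(\var)$ directly, with both variants resting on the same fact you verify, namely $b(\var)\in\Range(A)$. Your closing paragraph on support and nonnegativity addresses a concern that lies outside the lemma as stated (the paper's $\flowprof^0$ is an arbitrary, possibly signed, decomposition), but it is a reasonable observation about how the lemma is consumed later.
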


\begin{proof}
Let $\flowprof^0\in\mathbb R^\npaths$ be an arbitrary path flow decomposition of $\load(\rateprof(\var^0))$
with $\var^0\in I$ fixed, and set
\begin{equation}
\label{eq:f-mu}
\flowprof(\rateprof(\var)) \coloneqq \flowprof^0+
\begin{bmatrix}
\edgepaths\\
\sumflow
\end{bmatrix}^{\dag}
\begin{pmatrix}
\loadprof(\rateprof(\var))-\loadprof(\rateprof(\var^0))\\ 
\rateprof(\var)-\rateprof(\var^0)
\end{pmatrix},
\end{equation}
where $\edgepaths$ and $\sumflow$ are as in \eqref{eq:matricesConstraints},  and the symbol $L^{\dag}$ denotes the Moore-Penrose pseudoinverse
of the matrix $L$. 
Since the pseudoinverse is also a matrix, we get the conclusion.
\end{proof}

With these preliminaries, we may now proceed with the proof of \cref{th:WatlingSC}.

\begin{proof}[Proof of \cref{th:WatlingSC}~\ref{it:th:WatlingSC-1}]
Using \cref{lem:differentiability-of-fixed-regime-possibly-negative-flows}
we have that
for every regime $\regime$ one can find  a $\mathcal C^{1}$ solution $\var\mapsto\loadprof_{\regime}(\rateprof(\var))$ to the problem \eqref{eq:Beckmann-fixed-regime-relaxed}, 
defined in an open neighborhood of $\bar\var$, and such that $\loadprof_{\regime}(\rateprof(\var))$ coincides with the equilibrium $\loadprof(\rateprof(\var))$ whenever $\activ{\routes}(\rateprof(\var))$ is equal to $\regime$. 
Furthermore,  by applying \cref{lem:pseudoinverse} we can find $\mathcal C^{1}$ flows associated to such load profiles.
If we do that in the two cases of $\regime=\activ{\routes}(\bar\var^{-})$ and $\regime=\activ{\routes}(\bar\var^{+})$, we obtain differentiable flows in a neighborhood of $\bar\var$ that are equilibria wherever the active regime coincides with  $\regime$. 
Now, because the edge costs are strictly increasing, the equilibrium load vector $\bar{\loadprof}$ is unique, and then it follows by continuity that the flows satisfy
$\flowprof^{-}(\rateprof(\bar\var))=\flowprof^{+}(\rateprof(\bar\var))$, so that 
we can apply \cref{le:GeneralWatlingLemma} to conclude the proof.
\end{proof}

\begin{proof}[Proof of  \cref{th:WatlingSC}~\ref{it:th:WatlingSC-2}]
When the cost functions are affine, to the left and to the right of $\bar\var$ we can choose equilibrium flows in affine form
\begin{equation}
\label{eq:equil-affine-form}
\flowprof^{-}(\rateprof(\var))=\vecw^{-}\cdot\var+\vecz^{-},\quad
\flowprof^{+}(\rateprof(\var))=\vecw^{+}\cdot\var+\vecz^{+},
\end{equation}
with $\flowprof^{-}(\rateprof(\bar\var))=\flowprof^{+}(\rateprof(\bar\var))$. A proof of this in the case of a single commodity can be found in \citet[proposition~4.1]{ComDosScar:MathPRog2021}. The reasoning for the multi-commodity case is completely analogous.

These functions are  differentiable as functions of $\var$ so that \cref{le:GeneralWatlingLemma} implies that \ref{eq:Watling-Inequality-1} and \ref{eq:Watling-Inequality-2} in \cref{conj:Watling} hold with weak inequalities.
It remains to show that the inequalities are strict.
By arguing as in the proof of \cref{le:GeneralWatlingLemma}, we need to show that the inequality between $\thfunc(\bar\var^{-})$ and $\thfunc(\bar\var^{+})$ is strict, 
which requires to compare the values of the problems \eqref{eq:program_fprime_+} and \eqref{eq:program_fprime_-}. 
As one can deduce looking at \eqref{eq:quad_opt}, the two problems are exactly the problems $\parens*{\prim_{\rateprof(\bar\var)}^{\regime^+}}$ and $\parens*{\prim_{\rateprof(\bar\var)}^{\regime^-}}$ as in \cref{lem:differentiability-of-fixed-regime-possibly-negative-flows} for a game on the graph $\graph$, where the cost functions are the linear functions $\bar\cost_{\edge}(\load)=\cost'_{\edge}(\bar\load_{\edge})\cdot \load$, with $\bar\load_{\edge}=\sum_{\route\ni\edge}\flow^-_{\route}(\rate(\bar\var))=\sum_{\route\ni\edge}\flow^+_{\route}(\rate(\bar\var))$.

In these two problems, the objective function is the same, but there are more constraints in the case associated to the smaller regime between $\regime^{-}$ and $\regime^{+}$.
Suppose for instance that $\regime^{-}$ is strictly contained in $\regime^{+}$. 
Then the extra constraints in  \eqref{eq:program_fprime_-} with respect to \eqref{eq:program_fprime_+} are $\yvar_{\route}=0$ for $\route\in\regime^{+}\setminus\regime^{-}$,
with associated Lagrange multipliers $\multpath_{\route}$ given as in the proof of \cref{le:GeneralWatlingLemma}, that is, denoting $\odpair(\route)$ the commodity associated with the route $\route$,
\begin{equation}\label{eq:multiplier}
\multpath_{\route}=(\costprof_{\route}\circ\loadprof^{+})'(\bar\var)-(\eqcost^{(\odpair(\route))})'(\bar\var^{+}).
\end{equation}
The path $\route$ is  active when $\var>\bar\var$, but not when $\var<\bar\var$. 
Hence, there exists $\varepsilon>0$ such that
\begin{align}
\label{eq:cp-f-ineq}
(\cost_{\route}\circ\loadprof^{+})(\var)
&>\eqcost^{(\odpair(\route))}(\var), \quad\text{for } \var \in (\bar\var-\varepsilon,\bar\var) 
\intertext{because $\route$ is not active, and}
\label{eq:cp-f-eq}
(\cost_{\route}\circ\loadprof^{+})(\bar\var)
&=\eqcost^{(\odpair(\route))}(\bar\var)
\end{align}
by continuity of path costs. 

By assumption, the cost functions are affine; hence the path costs and equilibrium costs are also affine in the variable $\var\in (\bar\var-\varepsilon,\bar\var)$. 
This implies that the difference  $(\cost_{\route}\circ\loadprof^{+})'(\var)-(\eqcost^{(\odpair(\route))})'(\var)$ is a strictly negative constant on $(\bar\var-\varepsilon,\bar\var)$. 
Hence, the Lagrange multiplier $\multpath_{\route}$ assumes a strictly negative value at the optimum solution. 
Since the multipliers are unique for the problem $\parens*{\prim_{\rateprof(\bar\var)}^{\regime^-}}$ by \cref{lem:differentiability-of-fixed-regime-possibly-negative-flows}, relaxing the constraint $\yvar_{\route}=0$  in \eqref{eq:program_fprime_-} will produce a strict decrease in the optimum value of the objective function, that is, $\thfunc(\bar\var^{-})>\thfunc(\bar\var^{+})$,
which gives in turn the strict inequalities in the statement \ref{eq:Watling-Inequality-1} of \cref{conj:Watling}.

A similar argument holds 
for case
\ref{eq:Watling-Inequality-2} in \cref{conj:Watling}
when 
$\regime^{-}$ strictly contains $\regime^{+}$.
\end{proof}

\begin{remark}
To have statements \ref{eq:Watling-Inequality-1} and \ref{eq:Watling-Inequality-2} in \cref{conj:Watling}, linearity of the demand function is necessary. Indeed, as  shown by \cref{ex:nonradial-smoothmaxPoA-counterWatling},  \cref{le:GeneralWatlingLemma} can fail
even if the demand function is affine
(and even when restricting to affine cost functions).
Hence  \cref{th:WatlingSC} cannot be applied.
\end{remark}

\subsection*{Acknowledgments}
We thank two reviewers for their careful reading of the paper and their useful remarks and suggestions.   
Valerio Dose and Marco Scarsini are members of GNAMPA-INdAM. 
Their work was partially supported by the GNAMPA project CUP\_E53C22001930001 \emph{``Limiting behavior of stochastic dynamics in the Schelling segregation model.''} 
Marco Scarsini's work is partially supported by the MIUR PRIN 2022EKNE5K grant \emph{``Learning in markets and society.''}
Roberto Cominetti' research was supported by Proyecto Anillo ANID/PIA/ACT192094.

\appendix

\gdef\thesection{\Alph{section}} 
\makeatletter
\renewcommand\@seccntformat[1]{\appendixname\ \csname the#1\endcsname.\hspace{0.5em}}
\makeatother

\section{Missing proofs}
\label{se:supplementary}

\begin{proof}[Proof of \cref{lem:optimum-social-cost-smooth}]
The assumptions on the cost functions imply that the optimum flows are the Wardrop equilibria of the game $(\graph,\opt\costprof,\ODpairs)$, where
\begin{equation}
\label{eq:marginal-game}    
\opt\cost_{\edge}(\load_{\edge})\coloneqq \cost_{\edge}(\load_{\edge})+\load_{\edge}\cost'_{\edge}(\load_{\edge}).
\end{equation}
For every $\edge\in\edges$, the function $\opt\cost_{\edge}$ is continuous and nondecreasing. 
For this modified game, the minimal value of the Beckmann potential is 
\begin{equation}
\label{eq:min-Beckmann-marg}    
\opt \valueV(\rateprof)=\min_{\loadprof\in\loads_\rate}\sum_{\edge\in\edges}\load_{\edge}\cost_{\edge}(\load_{\edge}),
\end{equation}
which is the optimum social cost $\opt{\SC}(\rateprof)$. 
The result then follows by noting that the minimal value of the Beckmann potential is continuously differentiable with its gradient equal to the vector equilibrium costs, which are continuous (see \cref{lem:continuity}).
\end{proof}

The following technical lemma is 
used in the proof of \cref{th:differentiability-on-curve}.
In addition to the optimization problem \eqref{eq:Beckmann-fixed-regime-relaxed} we consider the optimal value function for the  perturbed minimization problem
\begin{equation}
\label{eq:Beckmann-fixed-regime-perturbed}
\tag{$\prim_{\rateprof,\pertxprof,\pertfprof}^{\regime}$}
\begin{split}
\valueV_{\regime}(\rateprof,\pertxprof,\pertfprof)=&\min_{(\loadprof,\flowprof)}\potential(\loadprof)
\\
&\text{ s.t. }  
 \sumflow \flowprof=\rateprof,\;\loadprof=\edgepaths \flowprof+\pertxprof, \text{ and } 
\flow_{\route}=\pertf_{\route}\text{ for all } \route\notin\regime,
\end{split}
\end{equation}
where $\pertxprof$ is a perturbation of the load vector and $\pertfprof$ is the flow perturbation vector. 

\begin{lemma}
\label{lem:differentiability-of-fixed-regime-possibly-negative-flows}
Let $(\graph,\costprof,\ODpairs)$ be a routing game structure and  $\regime$ a  given fixed regime. 
Then,
\begin{enumerate}
[label={\textup{(\alph*)}}, ref={\textup{(\alph*)}}]
\item
\label{it:lem:differentiability-of-fixed-regime-possibly-negative-flows-a}
For each $\rateprof\in\Rpos^\ODpairs$ 
the minimum in \eqref{eq:Beckmann-fixed-regime-relaxed} is attained at some $\loadprof\in{\loads}_{\rateprof}^{\regime}$. 
The edge costs $\multedge_{\edge} \coloneqq \cost_{\edge}(\load_{\edge})$ are the same for every optimum  $\loadprof$, and for each $\odpair\in\ODpairs$ there exists a path cost $\multod^{(\odpair)}$ such that $\sum_{\edge\in\route}\multedge_{\edge} = \multod^{(\odpair)}$ for all $\route\in\routes^{(\odpair)}\cap\regime$.  
Moreover, 
the optimal value function $\valueV_{\regime}(\argdot)$ is everywhere finite, convex, and differentiable at $(\rateprof,\zerovec,\zerovec)$, with $\nabla \valueV_\regime(\rateprof,\zerovec,\zerovec)=(\multodprof,\multedgeprof,\multpathprof)$ where
\begin{equation}
\label{eq:mult-path-edge-od}
\multpath_{\route}=\sum_{\edge\in\route}\multedge_{\edge}-\multod^{(\odpair)}\quad
\forall \route\in\routes^{(\odpair)}\setminus\regime.
\end{equation}

\item 
\label{it:lem:differentiability-of-fixed-regime-possibly-negative-flows-b}
If the cost functions are strictly increasing, then \eqref{eq:Beckmann-fixed-regime-relaxed} has a unique optimal solution $\loadprof_{\regime}(\rateprof)$. 
Moreover, if the costs  $\cost_{\edge}$ are $\mathcal C^{1}$ with strictly positive derivative, then $\rateprof\rightarrow\loadprof_{\regime}(\rateprof)$ is also $\mathcal C^{1}$.
\end{enumerate}
\end{lemma}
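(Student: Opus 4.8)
The plan is to read $(\prim_\rateprof^{\regime})$ as the minimization of the smooth convex Beckmann potential $\potential(\loadprof)=\sum_{\edge\in\edges}\Cost_\edge(\load_\edge)$ over the closed affine set $\loads^{\regime}_{\rateprof}$, and to extract every asserted property from convex duality together with the implicit function theorem. For existence in part~\ref{it:lem:differentiability-of-fixed-regime-possibly-negative-flows-a}, I would use that each $\Cost_\edge$ is convex and $\mathcal C^{1}$ with $\Cost_\edge'=\cost_\edge$: the extension property $\lim_{\load\to-\infty}\cost_\edge(\load)<0$ forces $\Cost_\edge(\load_\edge)\to+\infty$ as $\load_\edge\to-\infty$, while monotonicity gives $\Cost_\edge\to+\infty$ as $\load_\edge\to+\infty$. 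Because the direction space of $\loads^{\regime}_{\rateprof}$ is a genuine linear subspace (symmetric under $\boldsymbol d\mapsto-\boldsymbol d$), the recession of $\potential$ along every nonzero feasible direction is strictly positive, so $\potential$ is coercive on the feasible set and the minimum is attained. Uniqueness of the edge costs $\multedge_\edge=\cost_\edge(\load_\edge)$ then follows from a standard segment argument: if $\loadprof^{1},\loadprof^{2}$ are two optima, $\potential$ is affine along the segment joining them, which by convexity of each $\Cost_\edge$ forces $\cost_\edge$ to be constant on $[\load_\edge^{1},\load_\edge^{2}]$, so $\multedge_\edge$ is well defined.

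Next I would write the problem in the flow variables $\flowprof$, with demand constraints $\sum_{\route\in\routes^{(\odpair)}}\flow_\route=\rate^{(\odpair)}$ and the sign-free equalities $\flow_\route=0$ for $\route\notin\regime$. The first-order optimality conditions give, for active paths $\route\in\routes^{(\odpair)}\cap\regime$, the equality $\sum_{\edge\in\route}\multedge_\edge=\multod^{(\odpair)}$ with $\multod^{(\odpair)}$ the multiplier of the demand constraint, and for $\route\notin\regime$ the formula $\multpath_\route=\sum_{\edge\in\route}\multedge_\edge-\multod^{(\odpair)}$. For the value function, I would observe that $(\rateprof,\pertxprof,\pertfprof)$ enter only as right-hand sides of linear equalities while the objective is fixed and convex; hence $\valueV_\regime$ is the marginal function of a convex program, so it is convex, and it is finite everywhere since the feasible set is always nonempty (each $\routes^{(\odpair)}\cap\regime\ne\varnothing$) and the minimum is attained. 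By the standard convex sensitivity theorem, the subdifferential of $\valueV_\regime$ at the interior point $(\rateprof,\zerovec,\zerovec)$ equals the set of optimal multipliers $(\multodprof,\multedgeprof,\multpathprof)$. These are uniquely determined in the order $\multedge_\edge$ (from the unique edge costs), then $\multod^{(\odpair)}$ (the common active-path cost), then $\multpath_\route$ (from the formula), so the subdifferential is a singleton; a finite convex function with singleton subdifferential is differentiable there, which yields the claimed gradient.

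For part~\ref{it:lem:differentiability-of-fixed-regime-possibly-negative-flows-b}, strict monotonicity makes each $\Cost_\edge$ strictly convex, so $\potential$ is strictly convex on $\mathbb R^{\edges}$ and the load minimizer $\loadprof_\regime(\rateprof)$ is unique. For smoothness I would deliberately \emph{avoid} the flow variables, since the incidence $\edgepaths$ restricted to $\regime$ may be rank-deficient (parallel or otherwise redundant active paths), making the flow nonunique and breaking a naive implicit-function argument. Instead I would parametrize the load affine set as $\loadprof=\loadprof_0(\rateprof)+B\zvec$, where $\loadprof_0(\rateprof)$ depends linearly on $\rateprof$ through a linear right inverse of the full-row-rank demand map on $\regime$, and the columns of $B$ form a basis of the fixed direction space of $\loads^{\regime}_{\rateprof}$. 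The reduced objective $\phi(\zvec,\rateprof)=\potential(\loadprof_0(\rateprof)+B\zvec)$ is $\mathcal C^{2}$ with Hessian $B^{\top}\diag(\cost_\edge'(\load_\edge))B$, which is positive definite because $\cost_\edge'>0$ and $B$ has full column rank. Applying the implicit function theorem to the stationarity equation $\nabla_{\zvec}\phi(\zvec,\rateprof)=\zerovec$ produces a $\mathcal C^{1}$ map $\rateprof\mapsto\zvec(\rateprof)$, whence $\rateprof\mapsto\loadprof_\regime(\rateprof)=\loadprof_0(\rateprof)+B\zvec(\rateprof)$ is $\mathcal C^{1}$.

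I expect the main obstacle to be the differentiability of $\valueV_\regime$ in part~\ref{it:lem:differentiability-of-fixed-regime-possibly-negative-flows-a}: one must upgrade the generically set-valued sensitivity (subdifferential $=$ multiplier set) to true differentiability, which relies on pinning down the multipliers uniquely from the \emph{unique optimal edge costs} rather than from any regularity of the primal solution (which need not be unique here). The analogous subtlety in part~\ref{it:lem:differentiability-of-fixed-regime-possibly-negative-flows-b} is precisely the rank deficiency of the restricted incidence, circumvented by working in load space with a full-rank basis $B$, so that strict positivity of $\cost_\edge'$ becomes an invertible reduced Hessian.
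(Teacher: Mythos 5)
Your proposal is correct and follows essentially the same route as the paper: coercivity of the extended Beckmann potential for existence, the identification of $\partial\valueV_{\regime}(\rateprof,\zerovec,\zerovec)$ with the (unique) set of Lagrange multipliers for differentiability of the value function, and the implicit function theorem on a reduced optimality system for the $\mathcal C^{1}$ dependence of $\loadprof_{\regime}(\rateprof)$. The only differences are organizational: you pin down the multipliers starting from the uniqueness of the primal edge costs (via a segment argument) where the paper works with the explicit dual and uses strict convexity of the conjugates $\Cost_{\edge}^{*}$, and you eliminate the constraints by a null-space parametrization of the load affine set where the paper keeps the full KKT system and verifies $\Ker \Amatr^{\top}=\{\zerovec\}$ --- both pairs are standard equivalents.
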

\begin{proof}
\ref{it:lem:differentiability-of-fixed-regime-possibly-negative-flows-a}
In view of our extension of the costs $\cost_{\edge}(\argdot)$ to $\reals_-$,  we have $\lim_{x\to-\infty}\cost_{\edge}(\load)<0$ and also $\lim_{\load\to\infty}\cost_{\edge}(\load)>0$. Hence, using recession analysis, for each nonzero direction $\dprof\in\reals^\edges\setminus\{0\}$ we have
\begin{equation}
\label{eq:potential}
\potential^{\infty}(\dprof)=\lim_{\var\to\infty}\potential(\var\dprof)/\var=
\sum_{\edge\in\edges} \lim_{\var\to\infty}\cost_{\edge}(\var \dvar_{\edge})\dvar_{\edge}>0,
\end{equation}
so that $\potential$ is inf-compact, and therefore the set of minima of \eqref{eq:Beckmann-fixed-regime-perturbed} is nonempty  (see for example \cite[Theorem 9.2]{Roc:PUP1997}). 
This shows in particular that
the value function $\valueV_{\regime}(\argdot)$ is finite everywhere. 
Hence, by  convex duality, this function is convex and the subdifferential $\partial \valueV_\regime(\rateprof,\zerovec,\zerovec)$ coincides with the optimum solution
set of the dual problem. We next characterize this dual and
prove that it has a unique solution, from where we deduce that it is differentiable with $\nabla \valueV_\regime(\rateprof,\zerovec,\zerovec)$ given as in the statement of part \ref{it:lem:differentiability-of-fixed-regime-possibly-negative-flows-a}.

To write the dual problem we write explicitly the problem \eqref{eq:Beckmann-fixed-regime-relaxed} as
\begin{align}
\label{eq:explicit-relaxed-problem}
\min_{\loadprof,\flowprof}
\sum_{\edge\in\edges}\Cost_{\edge}(\load_{\edge})
\quad\text{s.t.}\
&\sum_{\route\in\routes^{(\odpair)}}\flow_{\route}=\rate^{(\odpair)}
\ \text{for all }\odpair\in\ODpairs,
\ \text{and }\flow_{\route}=0
\ \text{for }\route\notin\regime,\\
&\load_{\edge}=\sum_{\route\in\routes}\edgepath_{\edge\route}\flow_{\route}\ \text{for all }\edge\in\edges\nonumber.
\end{align}

Introducing multipliers $\multodprof=(\multod^{(\odpair)})_{\odpair\in\ODpairs}$, $\multpathprof=(\multpath_{\route})_{\route\not\in\regime}$, and  $\multedgeprof=(\multedge_{\edge})_{\edge\in\edges}$, 
and the  Lagrangian
\begin{equation}
\begin{split}    
\lagrang(\loadprof,\flowprof,\multodprof,\multpathprof,\multedgeprof)
&=
\sum_{\edge\in\edges} \Cost_{\edge}(\load_{\edge})
+\sum_{\odpair\in\ODpairs}\multod^{(\odpair)} \parens*{\rate^{(\odpair)}-\
\sum_{\route\in\routes^{(\odpair)}}\flow_{\route}}
-\sum_{\route\not\in \regime}\multpath_{\route} \flow_{\route}\\
&\quad
+\sum_{\edge\in\edges}\multedge_{\edge} \parens*{\sum_{\route\in\routes}\edgepath_{\edge\route}\flow_{\route}-\load_{\edge}}\\
&=
\sum_{\edge\in\edges} \Cost_{\edge}(\load_{\edge})-\multedge_{\edge} \load_{\edge}+\sum_{\odpair\in\ODpairs}\multod^{(\odpair)}\rate^{(\odpair)}+
\sum_{\route\in\regime} \flow_{\route} \parens*{\sum_{\edge\in \route}\multedge_{\edge}-\multod^{(\odpair(\route))}}\\
&\quad
+\sum_{\route\notin\regime} \flow_{\route}\parens*{\sum_{\edge\in \route}\multedge_{\edge}-\multod^{(\odpair(\route))}-\multpath_{\route}},
\end{split}
\end{equation}
where $\odpair(\route)$ is the commodity of path $\route$, the dual problem becomes
\begin{equation}
\label{eq:dual}
\sup_{\multodprof,\multpathprof,\multedgeprof}\min_{\loadprof,\flowprof}\lagrang(\loadprof,\flowprof,\multodprof,\multpathprof,\multedgeprof).    
\end{equation}
The inner minimum over $(\loadprof,\flowprof)$ can be solved explicitly to obtain the dual in final form as
\begin{equation}
\label{eq:relaxed-dual-final-problem}
\begin{split}
\sup_{\multodprof,\multpathprof,\multedgeprof}
-\sum_{\edge\in\edges}\Cost_{\edge}^{*}(\multedge_{\edge})
+\sum_{\odpair\in\ODpairs}\multod^{(\odpair)}\rate^{(\odpair)}
\quad\text{s.t.}\quad
&
\multod^{(\odpair(\route))}=\sum_{\edge\in \route}\multedge_{\edge}~~\text{for every } \route\in\regime,\\
&\multod^{(\odpair(\route))}=\sum_{\edge\in \route}\multedge_{\edge}-\multpath_{\route}~~\text{for every } \route\notin\regime,     
\end{split}
\end{equation}
where $\Cost_{\edge}^{*}$ is the Fenchel conjugate of $\Cost_{\edge}$.

This amounts to solving 
\begin{equation}
\label{eq:relaxed-dual-problem-simplified}
\min_{\multodprof,\multedgeprof}
\sum_{\edge\in\edges}\Cost_{\edge}^{*}(\multedge_{\edge})
-\sum_{\odpair}\multod^{(\odpair)}\rate^{(\odpair)}
\quad\text{s.t.}\quad
\sum_{\edge\in\route}\multedge_{\edge}=\multod^{(\odpair(\route))} ~~\text{for every } \route\in\regime
,
\end{equation}
and then defining
\begin{equation}
\label{eq:nu-formula-in-relaxed-problem}
\multpath_{\route}=\sum_{\edge\in\route}\multedge_{\edge}-\multod^{(\odpair(\route))}~~\text{for every } \route\not\in\regime.
\end{equation}
As mentioned above, from general convex duality, the optimum solution set of this dual coincides with the sub-differential $\partial \valueV_{\regime}(\rateprof,\zerovec,\zerovec)$ of the primal optimum value function. Since $\valueV_{\regime}(\argdot)$ is finite everywhere, this sub-differential is nonempty and the dual has optimum solutions. 
On the other hand, since $\Cost_{\edge}^{*}$ is strictly convex, the dual optimum solution $(\multodprof,\multpathprof,\multedgeprof)$ is unique  and therefore $\valueV_\regime(\argdot)$ is differentiable at $(\rateprof,\zerovec,\zerovec)$ with $\nabla \valueV_\regime(\rateprof,\zerovec,\zerovec)=(\multodprof,\multpathprof,\multedgeprof)$.
Writing the optimality conditions for \eqref{eq:explicit-relaxed-problem} we obtain $\cost_{\edge}(\load_{\edge})=\multedge_{\edge}$, and the characterization of the gradient follows from \eqref{eq:relaxed-dual-problem-simplified} and \eqref{eq:nu-formula-in-relaxed-problem}.
This establishes part \ref{it:lem:differentiability-of-fixed-regime-possibly-negative-flows-a} of the lemma.

\vspace{1ex}
\ref{it:lem:differentiability-of-fixed-regime-possibly-negative-flows-b} 
Note that the relaxed constraints in \eqref{eq:Beckmann-fixed-regime-relaxed} are just linear equality constraints. 
Projecting these equations onto the space of pairs $(\loadprof,\rateprof)$, one can remove the flow variables $\flowprof$ and rewrite the constraints as a
linear system of equations in the load variables $\loadprof$ only, that is,
\begin{equation}
\label{eq:linear-A-B} 
\Amatr\loadprof+\Bmatr\rateprof= \zerovec,
\end{equation}
where the matrices $\Amatr$ and $\Bmatr$ depend on the regime $\regime$, and the matrix $\begin{bmatrix}
\Amatr & \Bmatr    
\end{bmatrix}$ has linearly independent rows.
Hence, the problem can be restated in the equivalent form
\begin{equation}
\min_{\loadprof\in\mathbb R^{\edges}}\{\potential(\loadprof) \colon \Amatr\loadprof+\Bmatr\rateprof=\zerovec\}.
\tag{$\prim_{\rateprof}^{\regime}$}
\end{equation}

Introducing a Lagrange multiplier vector $\multAprof$, the optimality conditions are
\begin{equation}
\label{eq:optimality-conditions-withAB}
\begin{split}
\nabla\potential(\loadprof) + \Amatr^{\top}\multAprof 
&= \zerovec,\\
\Amatr\loadprof+\Bmatr\rateprof &= \zerovec.
\end{split}
\end{equation}

We note that for each $\rateprof \in\Rpos^\ODpairs$ the (unique) optimum solution $\loadprof=\loadprof(\rateprof)$  also has a unique corresponding multiplier $\multAprof = \multAprof(\rateprof)$. 
This follows from $\Ker \Amatr^{\top}=\{\zerovec\}$.
To prove the latter, suppose that $\Amatr^{\top}\multBprof = \zerovec$ for some vector $\multBprof$. 
For each $\rateprof\in\Rpos^{\ODpairs}$ we can find a 
feasible $\loadprof\in\loads_{\rateprof}$ such that $\Amatr\loadprof+\Bmatr\rateprof = \zerovec$, by just routing any flow of demand $\rateprof$ that uses the regime $\regime$. 
We have then
\begin{equation}
\label{eq:A-B}    
0 = \inner*{\multBprof}{\Amatr\loadprof+\Bmatr\rateprof} 
= \inner*{\Bmatr^{\top}\multBprof}{\rateprof}.
\end{equation}
for all $\rateprof\in\Rpos^{\ODpairs}$, from which we deduce $\Bmatr^{\top}\multBprof=\zerovec$. 
Hence
$\begin{bmatrix}
\Amatr & \Bmatr    
\end{bmatrix}^{\top} \multBprof = \zerovec$ and therefore $\multBprof = \zerovec$, because the rows of $\begin{bmatrix}
\Amatr & \Bmatr    
\end{bmatrix}$ 
are linearly independent.

The conclusion of the lemma follows from the implicit function theorem if we show that the left hand side of \eqref{eq:optimality-conditions-withAB} has a nonsingular Jacobian at any point $(\loadprof, \multAprof)$.

This amounts to proving that the following homogeneous linear system 
\begin{equation}
\label{eq:homogenous-jacobian-system}
\begin{split}
\nabla^2\potential(\loadprof) \dprof + \Amatr^{\top}\multBprof &= \zerovec\\
\Amatr \dprof &= \zerovec
\end{split}
\end{equation} 
admits only the trivial solution.
Multiplying the first equation by $\dprof$ and using the second equation it follows that 
\begin{equation}
\label{eq:singular-jacobian-condition-2}
0 = \inner*{\dprof}{\nabla^2\potential(\loadprof) \dprof}
+ \inner*{\dprof}{\Amatr^{\top}\multBprof}
= \inner*{\dprof}{\nabla^2\potential(\loadprof) \dprof}
=\sum_{\edge\in\edges}\cost_{\edge}'(\load_{\edge})\dvar_{\edge}^2,
\end{equation}
which implies $\dprof=\zerovec$ because $\cost_{\edge}'(\load_{\edge})>0$. 
Hence the first equation reduces to $\Amatr^{\top}\multBprof=\zerovec$, which implies $\multBprof=\zerovec$, because $\Ker(\Amatr^{\top})=\{\zerovec\}$, as shown above. 
This completes the proof.
\end{proof}
    
\begin{remark}
\label{rm:differentiability-with-cycles-condition}
\cref{lem:differentiability-of-fixed-regime-possibly-negative-flows,th:differentiability-on-curve} can be generalized to games with  cost functions that are just strictly increasing, assuming that the set of edges $\edge$ where $\cost'(\load_{\edge}(\rateprof))=0$ form a graph without undirected cycles. 
Indeed, \eqref{eq:homogenous-jacobian-system} implies that for every active edge $\edge$, either $\dvar_{\edge}=0$ or $\cost_{\edge}'(\load_{\edge})=0$, whereas the condition $\Amatr\dprof=\zerovec$ tells us that the vector $\dprof$ is a (possibly negative) flow of zero demand on our network. 
Since $\dvar_{\edge}=0$ for all edges such that $\cost_{\edge}'(\load_{\edge})>0$,  the vector $\dprof$ induces a flow of zero demand on the subnetwork formed by edges $\edge$ such that $\cost_{\edge}'(\load_{\edge})=0$. If the latter subnetwork has no undirected cycles, then we can conclude that $\dprof=\zerovec$ and complete the proof as above.
\end{remark}

\section{List of symbols}
\label{se:symbols}

\begin{longtable}{p{.15\textwidth} p{.83\textwidth}}

$\cost_{\edge}$ & cost of edge $\edge$\\
$\costprof$ & edge cost function vector\\
$\cost_{\route}$ & cost of path $\route$\\
$\Cost_{\edge}(\load_{\edge})$ & $\int_0^{\load_{\edge}}\cost_{\edge}(z)\diff z$\\
$\Cost_{\edge}^{*}(\argdot)$ & Fenchel conjugate of $\Cost_{\edge}(\argdot)$\\
$\sink^{(\odpair)}$ & destination of \ac{OD} pair $\odpair$\\
$\edge$ & edge\\
$\edges$ & set of edges\\
$\flow_{\route}$ & flow on path $\route$\\
$\eq{\flow}_{\route}$ & equilibrium flow on path $\route$\\
$\flowprof$ & flow vector\\
$\eq{\flowprof}$ & equilibrium flow vector\\
$\flows_{\rateprof}$ & $ \braces*{\flowprof\in\Rpos^{\routes}\colon 
\sum_{\route\in\routes^{(\odpair)}}\flow_{\route}=\rate^{(\odpair)}\text{ for all }\odpair\in\ODpairs}$, set of feasible flows, defined in \eqref{eq:flows}\\
$\graph$ & $ (\vertices,\edges)$, directed multigraph with set of vertices $\vertices$ and set of edges $\edges$ \\

$\ODpairs$ & set of \ac{OD} pairs \\

$\nOD$ & number of elements in $\ODpairs$\\
$\inter^{-}$ & $ (\bar\var-\varepsilon,\bar\var)$, defined in \cref{le:GeneralWatlingLemma} \\ 
$\inter^{+} $ & $ (\bar\var, \bar\var+\varepsilon)$, defined in \cref{le:GeneralWatlingLemma}\\
$\jac_{\costprof}(\argdot)$ & Jacobian matrix of the path costs function $\costprof:\Rpos^n\rightarrow\Rpos^n$\\
$\multod^{(\odpair)}$ & Lagrange multiplier associated to \ac{OD} pair $\odpair$\\
$\multodprof$ & \ac{OD}-pairs Lagrange multiplier vector\\
$\neigh$ & neighborhood\\
$\source^{(\odpair)}$ & origin of \ac{OD} pair $\odpair$\\
$\route$ & path\\
$\routes^{(\odpair)}$ &   set of paths of \ac{OD} pair $\odpair$\\

$\npaths^{(\odpair)}$ &  number of elements in $\routes^{(\odpair)}$\\
$\npaths$ & number of elements in $\routes$ \\
$\routes$ & 
$\union_{\odpair\in\ODpairs} \routes^{(\odpair)}$,  union of the path sets of every \ac{OD} pair, defined in \eqref{eq:paths}\\
$\activ{\routes}(\rateprof)$ & active regime, defined in \eqref{eq:active-regime}   \\
$\activ{\routes}(\bar\var^{-})$ & $ \activ{\routes}(\rateprof(\bar\var-\varepsilon))$, defined in \eqref{eq:left-right-regime} \\
$\activ{\routes}(\bar\var^{+})$ & $  \activ{\routes}(\rateprof(\bar\var+\varepsilon))$, defined in \eqref{eq:left-right-regime} \\
$\PoA(\rateprof)$ & \acl{PoA} with demand $\rateprof$, defined in \eqref{eq:poa}\\
$\odrate^{(\odpair)}$ & rate of \ac{OD} pair $\odpair$ for linearly increasing demand\\
$\odrateprof$ & rate vector for linearly increasing demand\\
$\regime$ & regime, defined in \cref{de:regime}\\ 
$\regime^{-}$ & $ \activ{\routes}(\bar\var^{-})$, defined in \cref{le:GeneralWatlingLemma} \\
$\regime^{+}$ & $  \activ{\routes}(\bar\var^{+})$, defined in \cref{le:GeneralWatlingLemma} \\
$\sumflow$  & defined in \eqref{eq:matricesConstraints}\\
$\SC$ & social cost, defined in \eqref{eq:social-cost}\\
$\eq{\SC}(\rateprof)$ & equilibrium social cost  with demand $\rateprof$, defined in \eqref{eq:equilibrium-social-cost}\\
    
$\opt{\SC}(\rateprof)$ & optimum social cost with demand $\rateprof$, defined in \eqref{eq:optimum-social-cost}\\
$\vertices$ & set of vertices\\
$\load_{\edge}$ & load on edge $\edge$\\
$\loadprof$ & load vector\\
$\loadprof_{\regime}(\rateprof)$ & optimum solution of \eqref{eq:Beckmann-fixed-regime-relaxed}\\
$\loads_{\rateprof}$ & set of load profiles induced by some flow of demand $\rateprof$\\
$\yvar_{\route}$ & introduced in \eqref{eq:program_fprime_+}\\
$\yvec$ & introduced in \eqref{eq:quad_opt}\\

$\multAprof$ & Lagrange multiplier vector\\
$\multBprof$ & Lagrange multiplier vector\\
$\canbasis_{\route}$ & $\route$-th element of the canonical basis\\ 
$\edgepath_{\edge\route}$
& $
\begin{cases}
1 & \text{if }\edge\in \route,\\
0 & \text{otherwise}.    
\end{cases}$, defined in \eqref{eq:loads}\\

$\edgepaths$ & defined in \eqref{eq:matricesConstraints}\\
$\multedge_{\edge}$ & Lagrange multiplier associated to edge $\edge$\\
$\multedgeprof$ & edge Lagrange multiplier vector\\
$\thfunc(\var)$ & 
$\sum_{\odpair\in\ODpairs}
\odrate^{(\odpair)} \cdot(\eqcost^{(\odpair)})'(\var)$, defined in \cref{eq:des}\\

$\eqcost^{(\odpair)}(\rateprof)$ & equilibrium cost of \ac{OD} pair $\odpair$ with demand $\rateprof$ \\
$\eqcostprof$ & equilibrium cost vector\\
$\rate^{(\odpair)}$ & demand at \ac{OD} pair $\odpair$\\
$\rateprof$ & $\parens*{\rate^{(1)},\dots,\rate^{(\nOD)}}$, demand vector\\
$\multpath_{\route}$ & Lagrange multiplier associated to path $\route$\\
$\multpathprof$ & path Lagrange multiplier vector\\
$\pertx_{\edge}$ & perturbation of load  $\load_{\edge}$\\
$\pertxprof$ & perturbation of load vector $\loadprof$\\
$\eqcostedge_{\edge}$ & $\cost_{\edge}(\eq{\load}_{\edge})$ equilibrium cost of edge $\edge$\\

$\potential(\loadprof)$ &$\sum_{\edge\in\edges}\Cost_{\edge}(\load_{\edge})$ potential \\  
$\pertf_{\route}$ & perturbation of flow $\flow_{\route}$\\
$\pertfprof$ & perturbation of flow vector $\flowprof$\\
$\zerovec$ & zero vector\\
$\onevec^{(\odpair)}$ & $ \sum_{\route\in\odpair} \canbasis_{\route}$ \\

\end{longtable}


\bibliographystyle{apalike}
\bibliography{biblio-games}

\end{document}